\theoremstyle{plain}
\newtheorem{theorem}{Theorem}[section]
\newtheorem{proposition}[theorem]{Proposition}
\theoremstyle{definition}
\newtheorem{remark}[theorem]{Remark}
\theoremstyle{remark}
\mathchardef\emptyset="001F
\numberwithin{equation}{section}
\newcommand{\R}{{\mathbb R}}
\newcommand{\Z}{\mathbb Z}
\newcommand{\wtos}{\mathrel{\mathop{\rightharpoonup}\limits^*}}
\newcommand{\e}{\varepsilon}
\newcommand{\be}{\begin{equation}}
\newcommand{\ee}{\end{equation}}
\newcommand{\bes}{\begin{eqnarray}}
\newcommand{\ees}{\end{eqnarray}}
\newcommand \dps{\displaystyle }
\newcommand{\mun}{\mu_{n}}
\newcommand{\den}{\delta_{n}}
\newcommand{\lan}{\lambda_{n}}
\def\a{\alpha}
\def\Un{{\mathcal U}_{n}(I)}
\def\bUn{\overline{{\mathcal U}}_{n}(I)}
\def\sgn{{\rm sign}}
\title{Frustrated ferromagnetic spin chains: a variational approach to chirality transitions}
\author{Marco Cicalese\thanks{Zentrum Mathematik - M7, Technische Universit\"at M\"unchen, Boltzmannstrasse 3, 85748 Garching, Germany. Email: {\tt cicalese@ma.tum.de}} \and Francesco Solombrino\thanks{Zentrum Mathematik - M7, Technische Universit\"at M\"unchen, Boltzmannstrasse 3, 85748 Garching, Germany. Email: {\tt francesco.solombrino@ma.tum.de}}}
\begin{document}

\maketitle

\begin{abstract}
We study the energy per particle of a one-dimensional ferromagnetic/anti-ferromagnetic frustrated spin chain with nearest and next-to-nearest interactions close to the helimagnet/ferromagnet transition point as the number of particles diverges.  
We rigorously prove the emergence of chiral ground states and we compute, by performing the $\Gamma$-limits of proper renormalizations and scalings, the energy for a chirality transition.
\end{abstract}

\section{Introduction}

Low-dimensional magnets have attracted the attention of the scientific community in the last years (see \cite{diep} and the references therein). Among them, edge-sharing chains of cuprates provide a natural example of frustrated lattice systems, the frustration resulting from the competition between ferromagnetic (F) nearest-neighbor (NN) and antiferromagnetic (AF) next-nearest-neighbor (NNN) interactions. In this paper we study some of the multiple scale properties of these systems, focusing on a classical spin model as a first step towards the understanding of its quantum analogue (see \cite{DmiKri} for a discussion about the relation between classical and quantum models in chains of cuprates). 

We consider a minimal energy model describing the magnetic properties of one dimensional frustrated magnetic systems: the so called F-AF spin chain model. On the one-dimensional torus $[0,1]$, the state of the system is described by the values of a vectorial spin variable parameterized over the points of the lattice $\Z$. The energy of a given state of the system $u:i\in\Z\mapsto u^{i}\in S^{1}$ is 
\begin{equation}\label{intro:energy}
E(u)=-J_{1}\sum_{i\in\Z}(u^{i},u^{i+1})+J_{2}\sum_{i\in\Z}(u^{i},u^{i+2}),
\end{equation}
where $J_{1},J_{2}>0$ are the NN and the NNN interaction parameters, respectively. While the first term of the energy is ferromagnetic and favors the alignment of neighboring spins, the second, being antiferromagnetic, frustrates it as it favors antipodal next-to-nearest neighboring spins. As a result the frustration of the system depends on the relative strength of the ferromagnetic/antiferromagnetic constants.  A more refined analysis shows that the frustration can be actually measured in terms of $\alpha=J_{1}/J_{2}$. More specifically (see Proposition \ref{minimi} and Remark \ref{rem:ferro}), for $\alpha\geq 4$ the ground state of the system is ferromagnetic, while for $0<\alpha\leq4$ it is helimagnetic (see figure \ref{intro:fig-1}). The description of the ground states of the F-AF system for a choice of the parameters such that $\alpha\simeq4$ is the main aim of our analysis. In this case the system is said to be close to the ferromagnet/helimagnet transition point (examples of edge-sharing cuprates in 
the vicinity of the ferromagnetic/helimagnetic transition point can be found in \cite{Dre-etal}, while an analysis of the thermodynamic properties of such spin chains can be found in \cite{DmiKri}, \cite{Harada1988} and \cite{Harada1984}).\\

In order to study F-AF chains close to the ferromagnet/helimagnet transition point we need to perform a multiple scale analysis of the energy in \eqref{intro:energy}. We start by first scaling the functional in \eqref{intro:energy} by a small parameter $\lan$ ($\lan\to0$ as $n\to\infty$). Further setting $J_{2}=1$ (so that now the frustration parameter is now $\a=J_{1}$) and $\Z_{n}=\{i \in \Z:\  \lambda_{n} i \in [0,1]\}$ we define $E_{n}:\{u:\Z_{n}\mapsto u^{i}\in S^{1}\}\to\R$ as
\begin{equation}\label{intro:energy-n}
E_{n}(u)=-J_{1}\sum_{i\in\Z_{n}}\lan(u^{i},u^{i+1})+\sum_{i\in\Z_{n}}\lan(u^{i},u^{i+2}).
\end{equation}
It turns out that the ground states of $E_{n}$ can be completely characterized (see Proposition \ref{minimi} and Remark \ref{rem:ferro}). Neighboring spins are aligned if $J_{1}\geq 4$ (ferromagnetic order), while they form a constant angle $\varphi=\pm\arccos(J_{1}/4)$ if $0<J_{1}<4$ (helimagnetic order). In this last case the two possible choices for $\varphi$ correspond to either clockwise or counter-clockwise spin rotations, or in other words to a positive or a negative chirality (see Fig \ref{intro:fig-1}). 
\begin{figure}
\begin{center}
\includegraphics[scale=.5 ]{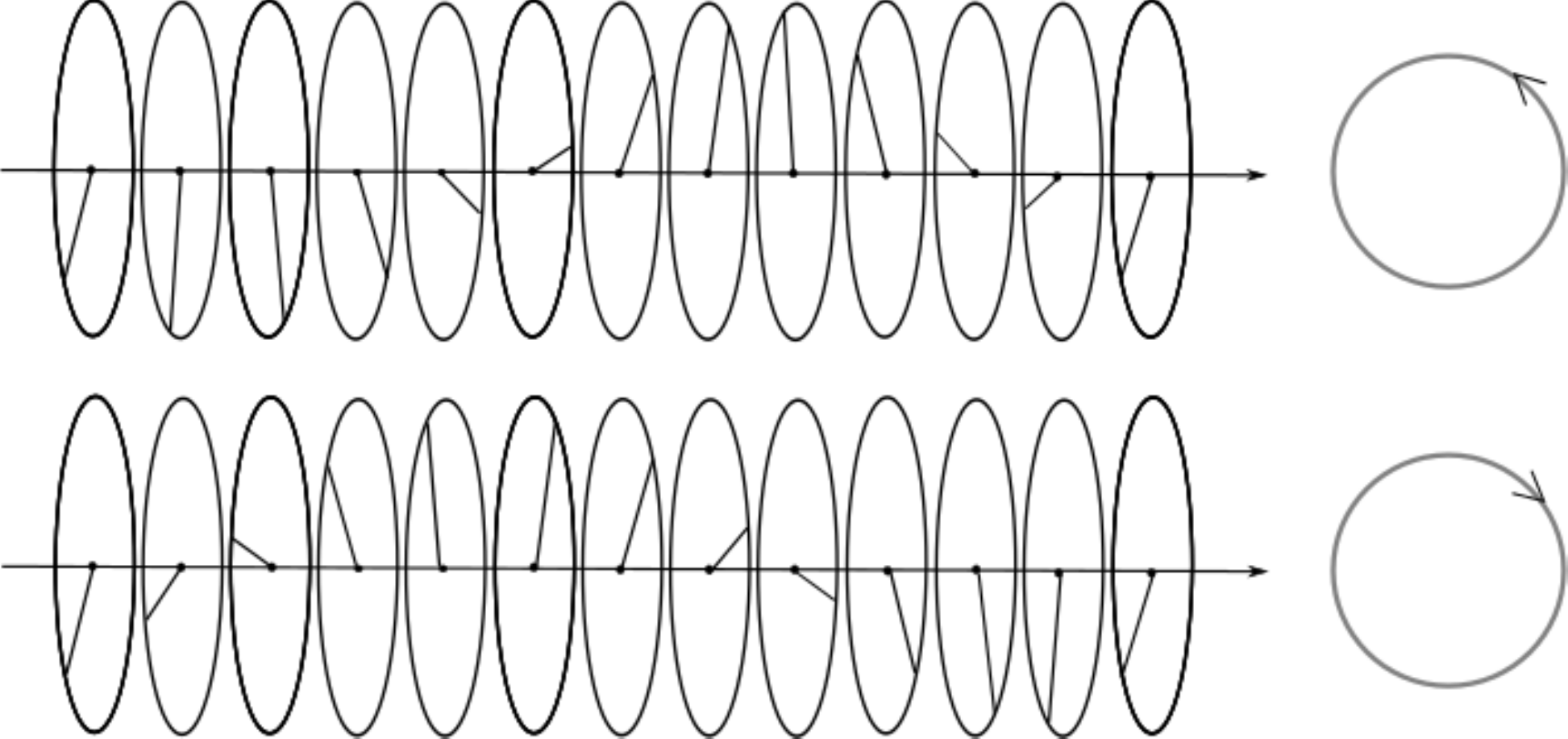}
\caption{Ground states of the spin system for $0<J_{1}<4$ for clockwise and counter-clockwise chirality}\label{intro:fig-1}
\end{center}
\end{figure}
Such a degeneracy is known in literature as {\it chirality symmetry}. The energy necessary to break this symmetry is, to the best of our knowledge, an open problem. In this paper we provide a solution to this problem in the case of a system close to the ferromagnet/helimagnet transition point, that is to say that we are able to find the correct scaling to detect the symmetry breaking and to compute the asymptotic behavior of the scaled energy describing this phenomenon as $J_{1}$ is close to $4$. Before coming to the description of our analysis, it is worth noticing that if instead of a vector spin parameter with continuous symmetry we consider a scalar one, i.e. $u\in\{\pm 1\}$, then the helicity symmetry translates into the periodicity of the ground states. In this case in \cite{BC} it has been proved that the asymptotic analysis of these systems can be performed without any restriction on the values of $J_{1}$.\\

To set up our problem we let the ferromagnetic interaction parameter $J_{1}$ depend on $n$ and be close to $4$ from below, that is in \eqref{intro:energy-n} we substitute $J_{1}$ by $J_{1,n}=4(1-\den)$ for some vanishing sequence $\den>0$. For such energies in Theorem \ref{32-th:homog-Li} we prove that, as a consequence of an abstract result proven in \cite{ACG}, their $\Gamma$-limit (with respect to the weak-$\star$ convergence in $L^{\infty}$) as $n\to\infty$ is a constant functional whose value can be approached by weakly vanishing sequences $u_{n}$ that may mix on a mesoscopic scale configurations having opposite chirality. Such a poor description suggests that, in order to get further informations on the ground states of the system we need to consider higher order $\Gamma$-limits (see \cite{GCB} and \cite{Bra-Tru} for more details as well as for the general theory of development by $\Gamma$-convergence). Note that the choice of the right energy scaling which may capture the phenomena we are interested 
in is not straightforward. In fact the continuous symmetry of the order parameter $u\in S^{1}$ adds a new difficulty: it allows for very slow variations in the angle between neighboring spins which results in the emergence of very low energy phase changes. This implies that, even if we expect to find a limit energy accounting for $0$-dimensional discontinuities of some parameter related to the chirality, the continuous symmetry of the spins makes the correct scaling not a 'surface'-type scaling. Note that this would not be the case if the spin field $u\in\{\pm 1\}$ as it is shown in \cite{ABC} (see also \cite{BC}) where the degeneracy of the ground states is solved by a surface scaling. Similar problems regarding the continuos symmetry of the order parameter arise already in \cite{ACXY} for NN systems in the context of XY spin models (see also \cite{ACP}, \cite{ADLGP}, \cite{AP} and \cite{BraCicSol} for related Ginzburg-Landau-type models). In \cite[example $1$]{ACXY} it is explicitly proved that the system 
does not undergo any phase separation that may be detected by a surface scaling. Such an example can be straightforwardly exported in the context of frustrated spin chains and, as a consequence, we are led to renormalize the energy of the system and study the asymptotic behavior of a new functional $H_{n}$ defined as
\begin{equation}\label{intro:Hn}
H_{n}(u)=\frac{E_{n}(u)-\min E_{n}}{\mu_{n}\lan}
\end{equation}
for some $\mu_{n}\to 0$ to be found. In terms of $H_{n}$, finding the energy the system spends in a transition between two states with different chirality translates into the following problem: depending on the scale $\den$
\begin{itemize}
\item[(i)] find a scaling $\mu_{n}$ and an order parameter $z_{n}$ such that if $\sup_{n}H_{n}(u_{n})\leq C$ then, as $n\to\infty$, $z_{n}$ converges to some $z$ describing a system whose chirality may have at most a finite number of discontinuities,
\item[(ii)] for such a choice of $\mu_{n}$ compute the $\Gamma$-limit of $H_{n}$ (with respect to the convergence $z_n\to z$ in the previous step) and interpret the limit functional as the energy the system spends on the scale $\lan\mu_{n}$ for a finite number of chirality transitions. 
\end{itemize}
The main result of this paper is contained in Theorem \ref{maintheorem} which states that the right scale to consider in order to keep track of energy concentration is $\lan\den^{3/2}$ (corresponding to the choice $\mu_{n}=\den^{3/2}$). We prove that, within this scaling, several regimes are possible. Roughly speaking, for $n$ large enough, we show that the spin system has a chirality transition on a scale of order $\lan/\sqrt{\den}$. As a result, depending on the value of $\lim_{n}\lan/\sqrt{\den}:=l\in[0,+\infty]$ different scenarios are possible (see Fig \ref{transizione} for a schematic picture of the transition). If $l=+\infty$ chirality transitions are forbidden (equivalently we find that the energy for a transition is infinite). If $l>0$ the spin system may have diffuse and regular macroscopic (on an order one scale) chirality transitions whose limit energy is finite on $W^{1,2}(I)$ (provided some boundary conditions are taken into account). When $l=0$ transitions on a mesoscopic scale are allowed. In 
this case the continuum limit energy is finite on $BV(I)$ and counts the number of jumps of the chirality of the system.\\

We think it is worth noticing that, to the best of our knowledge, this paper shows for the first time the presence of multiple scale regimes in a chirality transition. It is our opinion that this phenomenon is quite general and suggests that the analysis of frustrated discrete systems should take advantage from a rigorous variational method any time the parameters describing frustration and scaling may compete. As a final technical remark we would like to point out that, although our analysis is presently confined to the $1$-dimensional case, it can be easily extended to an $n$-dimensional systems for which NNN interactions are present only along the coordinate directions. Indeed, in such a case the $\Gamma$-limit of the energy of such systems can be straightforwardly obtained by a slicing procedure starting from our $1$-dimensional result.

\begin{figure}
\begin{center}
\includegraphics[scale=.65 ]{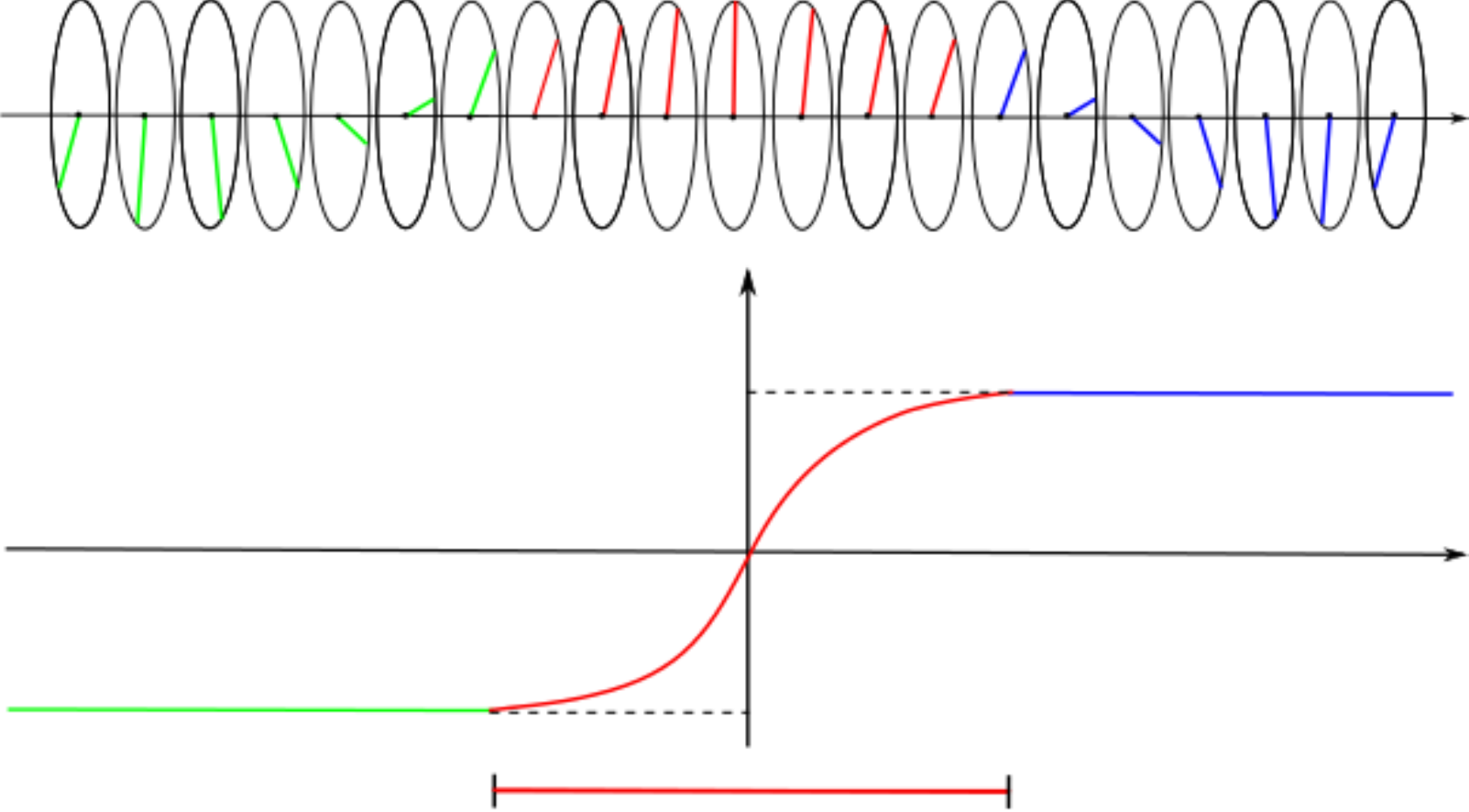}
\caption{The optimal configuration of the spins of the F-AF chain in a chirality transition (above). The transition in terms of a scalar parameter related to the chirality of the system (see Section for more details)}\label{transizione}
\end{center}
\begin{picture}(0,0)
\put(280,140){clockwise chirality}\put(200,58){$\lan/\sqrt{\den}\to l$}
\put(60,95){counter-clockwise chirality}
\end{picture}
\end{figure}
\section{Notation and Preliminaries}

Denoted by $J\subset\R$ an open interval and by $\lan$ a vanishing sequence of positive numbers, we define $\Z_n(J)$ as the set of those points $i \in \Z$ such that $\lambda_{n} i \in {\overline J}$. Given $x\in \R$, we denote by $[x]$ the integer part of $x$. The symbol $S^{1}$ stands as usual for the unit sphere of $\R^2$. Given two vectors $a,b\in\R^{2}$ we will denote by $(a,b)$ their scalar product. We will denote by ${\mathcal U}_{n}(J)$ the space of functions $u:i\in\Z_{n}(J)\mapsto u^{i}\in S^{1}$ and by $\overline{{\mathcal U}}_{n}(J)$ the subspace of those $u$ such that
\begin{equation}\label{prelim:boundary}
(u^{i_{min}+1},u^{i_{min}})=(u^{i_{max}},u^{i_{max}-1}),
\end{equation}
where $i_{min}$ and $i_{max}$ are the minimum and the maximum of $\Z_{n}(J)$, respectively. We analogously denote by ${\overline{\mathcal U}}(J)$ the space of functions $u:\Z\to S^{1}$ such that \eqref{prelim:boundary} holds with $i_{min}$ and $i_{max}$ the minimum and the maximum of $\Z\cap \overline J$, respectively.
Given $K\in\R^{m}$ we denote by $co(K)$ the convex hull of $K$. We set $Q_{h}=\left(0,h\right)$.
Given $v=(v_{1},v_{2}),w=(w_{1},w_{2})\in S^{1}$ we define the function $\chi[v,w]:S^{1}\times S^{1}\to {\pm 1}$ as  
$$
\chi[v,w]=\sgn(v_{1}w_{2}-v_{2}w_{1}),
$$ 
with the convention that $\sgn(0)=-1$.\\

We recall some preliminary results concerning the general theory of spin-type discrete systems in the bulk scaling. The following theorem has been proved in \cite{ACG}. We state it here in a version which best fits our setting.  
Let $K\subset\R^{m}$ be a bounded set. For all $\xi\in\Z$ let $f^\xi:\R^{2m}\to\R$ be a function such that
\begin{itemize} 
\item[(H1)] $f^\xi (u,v)=f^{-\xi}(v,u)$, 
\item[(H2)] for all $\xi$, $\dps{f^\xi(u,v)=+\infty}$ if $(u,v)\not\in K^2$,
\item[(H3)] for all $\xi$, there exists $C^\xi\geq 0$ such that $\dps{|f^\xi(u,v)|\leq C^\xi}$ for all $(u,v)\in K^2$, and $\sum_\xi C^\xi <\infty$.
\end{itemize}
Let us define the set of functions
\begin{eqnarray*}
D_{n}(J,\R^m)= \{u:\R\to\R^m\,:\,u \ {\rm constant \ on }\
\lan(i+[0,1))\ {\rm for \ any }\ i\in\Z_n(J)\}
\end{eqnarray*}
and the family of functionals $F_n:L^\infty(J,\R^m)\to(-\infty,+\infty]$ 
\begin{eqnarray}\label{eq:energie-eps}
F_n(u)=\begin{cases}\dps\sum\limits_{\xi\in \Z }\sum\limits_{ i
\in R_n^\xi(J)}\lan f^\xi
(u^{i},u^{i+\xi})
 & \text{if $u\in D_{n}(J,\R^m)$}\\
+\infty & \text{otherwise},\cr\end{cases}
\end{eqnarray}
where $R_{n}^\xi(J):=\{i\in\Z_n(J):\ i+\xi\in\Z_n(J)\}$. Given $v:\Z\to\R^{m}$ and $A\subset \R$ open and bounded, we define the discrete average of $v$ in $A$ as  
\begin{equation*}
\langle v \rangle_{1,A}=\frac{1}{\#(\Z\cap A)}\dps{\sum_{i  \in \mathbb{Z}\cap A}v^{i}}.
\end{equation*}

\begin{theorem} \label{32-th:homog-Li}
Let $\{f^\xi \}_{\xi}$ satisfy hypotheses (H1)-(H3). Then $F_n$ $\Gamma(w*-L^{\infty})$-converges to
$$F(u)=\int_{J} f_{hom}(u(x))dx $$
for all $u\in L^{\infty}(J,co(K))$, where $f_{hom}$ is
given by the following homogenization formula
\begin{equation}\label{32-eq:homog-form_li}
f_{hom}(z)=\lim_{\rho\to 0}\lim_{h\to + \infty} \frac{1}{h} \inf
\left\{ \sum_{\xi \in \Z} \sum_{\beta \in R^\xi_1(Q_h)}
f^\xi(v(\beta),v(\beta+\xi)),\langle v \rangle_{1,Q_h}\in
{\overline B(z,\rho)} \right\},
\end{equation}
where $R_{1}^\xi(J):=\{i\in\Z\cap J:\ i+\xi\in\Z\cap J\}$.
\end{theorem}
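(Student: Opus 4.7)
The plan is to derive this statement as an instance of the abstract bulk $\Gamma$-convergence theorem for discrete systems proved in \cite{ACG}. First I would check that hypotheses (H1)--(H3) are exactly the ones required there: (H1) is the standard symmetry under $\xi\leftrightarrow-\xi$, so that each bond in $F_n$ is counted unambiguously; (H2), by forcing $u^i\in K$ with $K$ bounded, gives immediate weak-$\ast$ precompactness of admissible sequences in $L^\infty(J;K)$; and (H3) provides the uniform $\xi$-summability needed to neglect, in the limit, the contribution of arbitrarily long-range bonds. After rescaling $x=\lan y$, the functional $F_n$ becomes a sum over pairs of sites in $\Z\cap(J/\lan)$ of $f^\xi(v(i),v(i+\xi))$ with the natural per-cell normalization $\lan$, and the homogenization cell problem of \cite{ACG} takes precisely the form \eqref{32-eq:homog-form_li}.

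The $\Gamma$-$\liminf$ inequality then follows by a localization/blow-up argument: given $u_n\wtos u$ of equibounded energy, one partitions $J$ into mesoscopic intervals of length $\eta\gg\lan$, exploits that the discrete average of $u_n$ on each such interval is close (for $n$ large) to the $L^1$-average of $u$ there, and bounds the restricted energy from below by $\eta\,f_{hom}(\text{average})$ via the inner infimum in \eqref{32-eq:homog-form_li}; sending $n\to\infty$ and then $\eta\to 0^+$ produces the integral $\int_J f_{hom}(u)\,dx$. Hypothesis (H3) is crucial here in order to discard, at negligible cost, the interactions crossing the boundaries of the mesoscopic intervals. For the $\Gamma$-$\limsup$ inequality one approximates an arbitrary $u\in L^\infty(J;\mathrm{co}(K))$ in the weak-$\ast$ topology by piecewise constant functions $u^\eta$ and, on each interval of constancy with value $z=u^\eta$, glues near-optimal test configurations from the cell problem, again using (H3) to control the gluing.

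The main obstacle, and the reason the formula is written with the \emph{average} constraint $\langle v\rangle_{1,Q_h}\in\overline{B}(z,\rho)$ rather than a pointwise Dirichlet-type boundary condition, is that a weak-$\ast$ limit of $K$-valued sequences need not take values in $K$ but only in $\mathrm{co}(K)$; hence for $z\in\mathrm{co}(K)\setminus K$ no test configuration can satisfy $v\equiv z$. The order of limits, first $h\to+\infty$ and then $\rho\to 0^+$, is precisely what allows the cell-problem minimizers to oscillate between $K$-valued states with prescribed mean $z$ and thereby to recover the correct homogenized density. Once these matching considerations are in place, the remaining work is essentially notational: the assumptions (H1)--(H3), the rescaling, and the cell formula correspond exactly to the abstract framework of \cite{ACG}, from which the conclusion of Theorem \ref{32-th:homog-Li} follows.
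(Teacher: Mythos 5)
The paper does not prove this theorem at all: it is stated in the Preliminaries as a result ``proved in \cite{ACG}'' and simply recalled in a form adapted to the setting, which is exactly what you do by reducing the statement to the abstract bulk homogenization theorem of \cite{ACG} after matching hypotheses (H1)--(H3). Your additional sketch of the localization/blow-up and gluing arguments, and your explanation of why the cell problem carries an average constraint (weak-$\ast$ limits of $K$-valued sequences live only in $co(K)$), is consistent with the standard proof architecture and raises no issues.
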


We now state (with minor variations) a result proved in \cite{BraYip} regarding the discrete approximations of Modica-Mortola type energies. We say that a function $W:\R\to[0,+\infty)$ is a double-well potential if it is locally Lipschitz and satisfies the following properties:
\begin{itemize}
\item[(1)] $W(z)=0$ if and only if $z\in\{\pm1\}$,
\item[(2)] $\lim_{{z\to\pm\infty}}W(z)=+\infty$,
\item[(3)] there exists $C_{0}>0$ such that $\{z:\, W(z)\leq C_{0}\}=I_{1}\cup I_{2}$ with $I_{1},I_{2}$ intervals on which $W$ is convex.
\end{itemize}
Let $\a_{n},\beta_{n}$ be two sequences of positive numbers such that $\lim_{n}\a_{n}=0$, $\lim_{n}\a_{n}/\beta_{n}=1$ and $\lim_{n}\lan/\a_{n}=0$ and let $G_{n}:L^{1}(J)\to[0,+\infty]$ be defined as
\begin{equation}\label{Gn}
G_{n}(u)=\begin{cases}\dps\a_{n}\sum_{i}\lan\left(\frac{u^{i+1}-u^{i}}{\lan}\right)^{2}+\frac{1}{\beta_{n}}\sum_{i}\lan W(u^{i})
 & \text{if $u\in D_{n}(J,\R)$}\\
+\infty & \text{otherwise},\end{cases}
\end{equation}
with $W$ a double-well potential. The following $\Gamma$-convergence result holds. 
\begin{theorem}\label{braides-yip}
Let $G_{n}:L^{1}(J)\to[0,+\infty]$ be as in \eqref{Gn}, then, with respect to the $L^{1}(J)$ convergence,
\begin{equation}\label{Glim-BraYip}
\Gamma\hbox{-}\lim_{n}G_{n}(u)=\begin{cases}
C_{W}\#(S(u)\cap J) & \text{if $u\in BV(J,\{\pm1\})$}\\
+\infty & \text{otherwise in $L^{1}(J)$},\end{cases}
\end{equation}
where $C_{W}:=2\int_{-1}^{+1}\sqrt{W(s)}\ ds$.
\end{theorem}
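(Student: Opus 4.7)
My plan is to run the discrete analogue of the classical Modica--Mortola argument. The workhorse is the elementary pointwise inequality
\[
\alpha_n \lambda_n \Bigl(\tfrac{u^{i+1}-u^i}{\lambda_n}\Bigr)^{\!2} + \frac{\lambda_n}{\beta_n}\,W(u^i) \;\geq\; 2\sqrt{\alpha_n/\beta_n}\,|u^{i+1}-u^i|\sqrt{W(u^i)}
\]
obtained from $a^2+b^2\geq 2ab$. Summing in $i$ and using $\alpha_n/\beta_n\to 1$ gives
\[
G_n(u) \;\geq\; (2-o(1))\sum_i |u^{i+1}-u^i|\sqrt{W(u^i)},
\]
which, with the primitive $\Phi(t):=\int_0^t\sqrt{W(s)}\,ds$, is essentially the discrete total variation of $\Phi\circ u$. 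From this I will extract compactness, the liminf bound, and ultimately match it with a recovery sequence.

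\textbf{Compactness and liminf.} Let $G_n(u_n)\leq C$. The potential bound $\sum_i\lambda_n W(u_n^i)\leq C\beta_n\to 0$ together with hypothesis (2) on $W$ forces $u_n$ to concentrate (in $L^1$-measure) on the wells $\{\pm 1\}$. The displayed inequality provides a uniform bound on the discrete total variation of $\Phi\circ u_n$; by Helly's theorem applied to the piecewise-affine interpolant, a subsequence converges in $L^1(J)$ to some $v\in BV(J)$, and since $\Phi$ is strictly monotone between the wells this yields $u_n\to u\in BV(J,\{\pm 1\})$. For the liminf, fix a jump $x_j\in S(u)\cap J$ and isolate it in a small open neighborhood $I_j$. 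Using $u_n\to u$ in $L^1(I_j)$ and hypothesis (3) (which separates the wells by a region where $W\geq C_0$), one can select indices $a_n<b_n$ in $I_j\cap\Z_n(J)$ with $u_n^{a_n}\to-1$ and $u_n^{b_n}\to+1$, and a telescoping estimate gives
\[
\sum_{i=a_n}^{b_n-1}|u_n^{i+1}-u_n^i|\sqrt{W(u_n^i)} \;\geq\; |\Phi(u_n^{b_n})-\Phi(u_n^{a_n})|+o(1) \;\longrightarrow\; \int_{-1}^{+1}\sqrt{W(s)}\,ds \;=\; \tfrac{C_W}{2},
\]
where the error is controlled via local Lipschitz continuity of $\sqrt{W}$ on the convex intervals of hypothesis (3) together with $\lambda_n/\alpha_n\to 0$. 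Summing over disjoint neighborhoods of the finitely many jumps and recovering the factor $2$ delivers $\liminf_n G_n(u_n)\geq C_W\,\#(S(u)\cap J)$.

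\textbf{Recovery sequence and main obstacle.} For $u\in BV(J,\{\pm 1\})$ with jump set $\{x_1,\dots,x_N\}$, set $u_n^i:=u(\lambda_n i)$ away from the jumps (zero energy cost) and glue in, around each $x_j$, a scaled optimal profile: pick a smooth monotone $\psi\colon\R\to(-1,1)$ with $\psi(\pm\infty)=\pm 1$ solving $2\psi''=W'(\psi)$, and define $u_n^i:=\psi((\lambda_n i-x_j)/\sqrt{\alpha_n\beta_n})$ on a mesoscopic window. Thanks to $\lambda_n/\alpha_n\to 0$ and $\alpha_n/\beta_n\to 1$, a direct Riemann-sum computation converts the local discrete energy into $\int_\R[(\psi')^2+W(\psi)]\,dx=C_W$ per jump, with a negligible truncation/gluing error. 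The hardest step is the liminf near a jump: one must exclude that $u_n$ performs many partial, non-completing excursions that would inflate the discrete total variation of $\Phi\circ u_n$ without producing a jump in the limit. This is handled by using hypothesis (3) quantitatively: every excursion away from the wells pays energy proportional to its length, and the potential bound forces the total length of such excursions to vanish, so in the limit only genuine $-1\leftrightarrow+1$ transitions contribute, each exactly $C_W$.
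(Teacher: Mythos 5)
Your argument is correct in outline, but it takes a genuinely different route from the paper: the paper does not reprove the discrete Modica--Mortola result at all. It simply invokes Theorem 2.1 of the cited Braides--Yip paper (which treats the balanced functional with $\frac{1}{\alpha_n}$ in front of the potential) and reduces to it by noting that along any sequence with $\sup_n G_n(u_n)\leq C$ the potential term satisfies $\sum_i\lambda_n W(u_n^i)\leq C\beta_n$, so that replacing $\frac{1}{\beta_n}$ by $\frac{1}{\alpha_n}$ changes the energy by at most $C\,\frac{|\beta_n-\alpha_n|}{\alpha_n}=o(1)$; the whole proof is three lines. You instead absorb the imbalance multiplicatively through the factor $\sqrt{\alpha_n/\beta_n}\to 1$ in the discrete Young inequality and then run the full argument from scratch. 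What the paper's route buys is brevity and rigour modulo a citation; what yours buys is self-containedness, at the price that the genuinely delicate steps are exactly the content of the cited theorem and remain at sketch level in your write-up: (a) the passage from $\sum_i|u^{i+1}_n-u^i_n|\sqrt{W(u^i_n)}$ to the telescoped increment of $\Phi\circ u_n$ needs the uniform smallness $|u^{i+1}_n-u^i_n|\leq C\sqrt{\lambda_n/\alpha_n}$ together with a modulus of continuity for $\sqrt W$ --- note that $\sqrt W$ need not be locally Lipschitz where $W$ vanishes (only $W$ is assumed locally Lipschitz, and convexity of $W$ on $I_1,I_2$ does not help), so you should use $|\sqrt a-\sqrt b|\leq\sqrt{|a-b|}$ instead, and you must also check that the accumulated error over a transition window stays $o(1)$; (b) the exclusion of spurious partial excursions, for which hypothesis (3) on the sublevel set $\{W\leq C_0\}$ is used quantitatively, is asserted rather than carried out. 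Neither point is a fatal gap --- both are standard and fixable --- but as written they are placeholders for the hardest part of the proof, which is precisely what the paper outsources to the reference.
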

\begin{proof}
The proof follows by Theorem $2.1$ in \cite{BraYip} once we observe that for all $u_{n}$ such that $\sup_{n}G_{n}(u_{n})\leq C$ we have 
\begin{equation*}
\left|\frac{1}{\beta_{n}}-\frac1{\alpha_{n}}\right|\sum_{i}\lan W(u^{i})\leq C\ \frac{|\beta_{n}-\alpha_{n}|}{\alpha_{n}},
\end{equation*}
so that 
\begin{equation*}
G_{n}(u_{n})=\dps\a_{n}\sum_{i}\lan\left(\frac{u^{i+1}-u^{i}}{\lan}\right)^{2}+\frac{1}{\a_{n}}\sum_{i}\lan W(u^{i})+o(1).
\end{equation*}
\end{proof}
\begin{remark}\label{rem:W}
In the explicit case $W(s)=(1-s^{2})^{2}$ the constant $c_{W}=\frac{8}{3}$.
\end{remark}

\section{The energy model: the bulk scaling}\label{section:abstract}

In this section we introduce the F-AF model of a frustrated ferromagnetic spin chain and prove a first result concerning the $\Gamma$-limit of its bulk scaling.\\

Let $I=(0,1)$ and let us consider a pairwise-interacting discrete system on the lattice $\Z_n(I)$ whose state variable is denoted by $u:\Z_n(I)\to S^{1}$. Such a system is driven by an energy $E_n:\Un\to(-\infty,+\infty)$ given by
\begin{eqnarray*}\label{energy_{1}}
E_n(u)=-J_{1}\sum_{i=0}^{\left[1/\lan\right]-2}\lan(u^{i},u^{i+1})+J_{2}\sum_{i=0}^{{\left[1/\lan\right]-2}}\lan(u^{i},u^{i+2}),
\end{eqnarray*}
for some non negative constants $J_{1},J_{2}$. Without loss of generality we will set $J_{2}=1$, thus considering the family of energies
\begin{eqnarray}\label{energy_{2}}
E_n(u)=-J_{1}\sum_{i=0}^{\left[1/\lan\right]-2}\lan(u^{i},u^{i+1})+\sum_{i=0}^{{\left[1/\lan\right]-2}}\lan(u^{i},u^{i+2}).
\end{eqnarray}
Moreover we will consider the case $J_{1}\in(0,4]$, the case $J_{1}>4$ will be shortly discussed in Remark \ref{rem:ferro}.\\
Since we are not interested to the possible formation of boundary layers, we fix periodic boundary conditions on the system:
\begin{equation}\label{boundary}
(u^{1},u^{0})=(u^{[1/\lan]},u^{[1/\lan]-1})
\end{equation}
or equivalently $u\in\bUn$.
\begin{remark}
The periodic boundary conditions in \eqref{boundary} are an alternative to the computation of the $\Gamma$-limit of $E_{n}$ with respect to a local convergence. 
\end{remark}
As usual in the analysis of discrete systems we may embed the family of functionals on a common functional space, extending $E_{n}$ to some Lebesgue space. To this end we associate to any $u\in\bUn$ a piecewise-constant interpolation belonging to the class
\begin{equation}\label{cepsB}
C_n(I,S^{1}):=\{u\in\bUn: u(x)=u(\e i)\,\,\hbox{ if } x \in \lan(i+[0,1)),\,i \in \Z_n(I)\}. 
\end{equation}
As a consequence we may see the family of energies $E_n$ as defined on a subset
of $L^\infty(I,S^{1})$ and consider their extension on $L^\infty(I,S^{1})$. With an abuse of notation we do not relabel these functionals and set $E_n:L^\infty(I,S^{1})\to(-\infty,+\infty]$ as

\begin{equation}\label{energy_{3}}
E_{n}(u)=\begin{cases}
\dps -J_{1}\sum_{i=0}^{\left[1/\lan\right]-2}\lan(u^{i},u^{i+1})+\sum_{i=0}^{{\left[1/\lan\right]-2}}\lan(u^{i},u^{i+2})&\textit{if }u\in C_n(I,S^{1})\\
+\infty&\textit{otherwise.}
\end{cases}
\end{equation}
We now define the functional $H_{n}:L^\infty(I,S^{1})\to(-\infty,+\infty]$ as
\begin{equation}\label{Hn}
H_{n}(u)=\begin{cases}
\dps\frac{1}{2}\sum_{i=0}^{{\left[1/\lan\right]-2}}\lan\left|u^{i+2}-\frac{J_{1}}{2}u^{i+1}+u^{i}\right|^{2}&\textit{if }u\in C_n(I,S^{1})\\
+\infty&\textit{otherwise.}
\end{cases}
\end{equation}
Since $|u^{i}|=1$ for all $i\in\Z_{n}$, thanks to \eqref{boundary}, the energy in \eqref{energy_{3}} can be rewritten, in terms of $H_{n}$ as 
\begin{eqnarray}\label{enhn}
E_n(u)=H_{n}(u)-\left(1+\frac{J_{1}^{2}}{8}\right)(1-c_{n}\lan),
\end{eqnarray}
for $c_{n}=\frac{1}{\lan}-\left[\frac{1}{\lan}\right]+1\in[1,2)$, so that 
\begin{equation}\label{cn}
\sum_{i=0}^{{[1/\lan]-2}}\lan=(1-c_{n}\lan).
\end{equation}
Equality \eqref{enhn} suggests that in order to study the asymptotic properties of $E_{n}$ we can equivalently study the non negative functional $H_{n}$.
\begin{subsection}{Ground states of $H_{n}$}
In this section we characterize the global minimizers of $E_{n}$, we give upper and lower bounds on its $\Gamma$-limit as $n\to\infty$ for different values of $J_{1}$. As a corollary we show that in the case $J_{1}=4$, the continuum limit is indeed trivial.
\begin{proposition}\label{minimi}
Let $E_{n}:L^\infty(I,S^{1})\to(-\infty,+\infty]$ be the functional in \eqref{energy_{1}} and $0\le J_1 \le 4$. Then 
\begin{equation}\label{minimo}
\min_{u\in L^\infty(I,S^{1})} E_{n}(u)=-\left(1+\frac{J_{1}^{2}}{8}\right)(1-c_{n}\lan).
\end{equation}
Furthermore, a minimizer $u_n$ of $E_n$ over $L^\infty(I,S^{1})$ satisfies 
\begin{equation}\label{eq:Hn_minimal2}
(u_n^{i},u_n^{i+1})=\frac{J_{1}}{4}\quad\hbox{and}\quad(u_n^{i},u_n^{i+2})=\frac{J_{1}^{2}}{8}-1
\end{equation}
for all $i=0,\dots, \left[1/\lan\right]-2$.
\end{proposition}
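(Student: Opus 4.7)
The key is to exploit the identity \eqref{enhn}, which rewrites
\[
E_n(u) = H_n(u) - \Bigl(1+\tfrac{J_1^2}{8}\Bigr)(1 - c_n\lan),
\]
where $H_n \ge 0$ by its very definition \eqref{Hn}. So the minimum value in \eqref{minimo} is just a lower bound for $E_n$, and it will be achieved precisely when $H_n(u)=0$, i.e.\ when
\begin{equation}\label{eq:recurrence_plan}
u^{i+2} - \tfrac{J_1}{2}\, u^{i+1} + u^i = 0 \qquad \text{for every admissible } i.
\end{equation}
The plan is therefore to (a) produce a configuration $u\in C_n(I,S^1)$ satisfying \eqref{eq:recurrence_plan}, and (b) read off \eqref{eq:Hn_minimal2} as an algebraic consequence of \eqref{eq:recurrence_plan} together with $|u^i|=1$.

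For existence, I would view \eqref{eq:recurrence_plan} as a linear vector recurrence whose characteristic polynomial is $t^2 - (J_1/2)t + 1$, with discriminant $J_1^2/4 - 4 \le 0$ under the assumption $0\le J_1\le 4$. The roots are $e^{\pm i\varphi}$ with $\cos\varphi = J_1/4$. Consequently, if $R$ denotes the planar rotation by angle $\varphi$, then $R$ satisfies $R^2 - (J_1/2)R + \mathrm{Id} = 0$ in $\mathbb{M}^{2\times 2}$, and for any fixed $u^0 \in S^1$ the configuration $u^i := R^i u^0$ lies in $S^1$ and solves \eqref{eq:recurrence_plan}. Moreover, since $(u^{i+1},u^i)=\cos\varphi$ is independent of $i$, the periodic condition \eqref{boundary} is automatically met, so $u\in C_n(I,S^1)$ and $H_n(u)=0$, proving \eqref{minimo}.

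For the characterization of minimizers, suppose $u_n$ achieves the minimum, so that $H_n(u_n)=0$ and hence \eqref{eq:recurrence_plan} holds with $u=u_n$. Taking the squared norm of $u_n^{i+2} = \tfrac{J_1}{2}\, u_n^{i+1} - u_n^i$ and using $|u_n^k|=1$ for every $k$ yields
\[
1 = \tfrac{J_1^2}{4} - J_1 (u_n^i, u_n^{i+1}) + 1,
\]
which gives $(u_n^i,u_n^{i+1}) = J_1/4$. Taking the scalar product of \eqref{eq:recurrence_plan} with $u_n^i$ and inserting the value just obtained yields $(u_n^i,u_n^{i+2}) = J_1^2/8 - 1$, which is exactly \eqref{eq:Hn_minimal2}.

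There is no real obstacle in this argument; the only point that requires a little care is checking compatibility with the class $C_n(I,S^1)$ (periodic boundary condition), which luckily falls out for free from the fact that all ``constant-angle'' configurations make $(u^{i+1},u^i)$ independent of $i$. The approach relies in an essential way on the identity \eqref{enhn}, which converts the sign-indefinite problem of minimizing $E_n$ into the nonnegative problem for $H_n$, and on the observation that the vanishing of $H_n$ is equivalent to a second-order linear recurrence whose characteristic roots lie on the unit circle precisely in the regime $J_1\in[0,4]$.
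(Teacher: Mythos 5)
Your proposal is correct and follows essentially the same route as the paper: both reduce to the nonnegative functional $H_n$ via \eqref{enhn}, exhibit the constant-angle configuration $u^i=R^i u^0$ with $\cos\varphi=J_1/4$ (you via the characteristic roots of the recurrence, the paper via prosthaphaeresis), and extract \eqref{eq:Hn_minimal2} from $H_n(u_n)=0$ by the same scalar-product algebra, merely in the opposite order. The only shared caveat (present in the paper's proof as well) is that the characterization step implicitly assumes $J_1>0$.
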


\begin{proof}
Let $H_n$ be defined as in \eqref{Hn}. Since $H_n \ge 0$, by \eqref{enhn} we deduce $E_n(u)\ge -\left(1+\frac{J_{1}^{2}}{8}\right)(1-c_{n}\lan)$ for all $u \in L^\infty(I,S^{1})$. Now, fix $\varphi \in [-\frac \pi2, \frac \pi2]$ so that $\cos(\varphi)=\frac{J_{1}}{4}$. Then, we construct $u_n \in C_n(I,S^{1})$ by setting, for all $i=0,\dots, \left[1/\lan\right]$,
$$
u^i_n=(\cos(\varphi i), \sin(\varphi i))\,.
$$
By the prosthaphaeresis formulas we get
$$
u^i_n+u^{i+2}_n= 2 \cos(\varphi) u^{i+1}_n= \frac{J_{1}}{2}u^{i+1}_n
$$
for all $i=0,\dots, \left[1/\lan\right]-2$. This implies $H_n(u_n)=0$, thus $E_n(u_n)=-\left(1+\frac{J_{1}^{2}}{8}\right)(1-c_{n}\lan)$ and \eqref{minimo} follows.

Consider now a minimizer $u_n$ of $E_n$ over $L^\infty(I,S^{1})$. By definition of $E_{n}$, we have that $u_{n}\in C_n(I,S^{1})$. By \eqref{minimo} it must be $H_n(u_n)=0$, which in turn implies
\begin{equation}\label{eq:Hn_minimal1}
u^{i+1}_n=\frac{2}{J_{1}}(u^{i}_n+u^{i+2}_n)
\end{equation}
for all $i=0,\dots, \left[1/\lan\right]-2$. Since $u_n$ takes values on the unit sphere, by taking the modulus squared in \eqref{eq:Hn_minimal1} we further get that 
$$
1=\frac{4}{J_{1}^{2}}|u^{i}_n+u^{i+2}_n|^{2}=\frac{8}{J_{1}^{2}}(1+ (u^{i}_n,u^{i+2}_n))\,,
$$
so that 
\begin{equation*}
(u^{i}_n,u^{i+2}_n)=\frac{J_{1}^{2}}{8}-1.
\end{equation*}
By this and \eqref{eq:Hn_minimal1} we also get
\begin{equation*}
(u^{i}_n,u^{i+1}_n)=\frac{2}{J_{1}}(u^{i}_n,u^{i}_n+u^{i+2}_n)=\frac{2}{J_{1}}(1+(u^{i}_n,u^{i+2}_n))=\frac{J_{1}}{4}\,,
\end{equation*}
as required.
\end{proof}

\begin{remark}\label{rem:ferro}
Note that the case $J_{1}>4$ is trivial. In fact the ground states are all ferromagnetic, that is $u_{n}^{i}=\bar u$ for all $i=0,\dots,[1/\lan]$ and for some $\bar u\in S^{1}$. Indeed in this case, set $E_{n}^{(J_{1}=4)}$ the energy in \eqref{energy_{3}} for $J_{1}=4$, we have that, for all $u\in\bUn$ 
\begin{eqnarray}\label{enremark}
E_n(u)&=&-J_{1}\sum_{i=0}^{\left[1/\lan\right]-2}\lan(u^{i},u^{i+1})+\sum_{i=0}^{{\left[1/\lan\right]-2}}\lan(u^{i},u^{i+2})\\
&=&E_{n}^{(J_{1}=4)}(u)-(J_{1}-4)\sum_{i=0}^{\left[1/\lan\right]-2}\lan(u^{i},u^{i+1}).
\end{eqnarray}
By the previous proposition $E_{n}^{(J_{1}=4)}$ is minimized on uniform states, which trivially also holds true for the second term in the above sum. In particular the minimal value can be straightforwardly computed: 
\begin{equation*}
\min E_{n}(u)=-\left(J_{1}-1\right)(1-c_{n}\lan).
\end{equation*}
\end{remark}


\end{subsection}
\begin{subsection}{Zero order estimates}
The following theorem is the main result of this section. 

\begin{theorem}\label{zeroorder}
Let $E_{n}:L^\infty(I,S^{1})\to(-\infty,+\infty]$ be the functional in \eqref{energy_{1}}. Then $\Gamma\hbox{-}\lim_{n}E_{n}(u)$ with respect to the weak-$*$ convergence in $L^{\infty}(I)$ is given by 
\begin{equation}\label{Gamma-lim}
E(u):=\begin{cases}
\int_{I}f_{hom}(u(x))\ dx&\textit{if } |u|\leq 1,\\
+\infty&\textit{otherwise in } L^{\infty}(I,\R^{2}),
\end{cases}
\end{equation}
where the convex function $f_{hom}:B_{1}\to\R$ is given by the following asymptotic homogenization formula:
\begin{eqnarray}\label{fhom}
f_{hom}(z):=\lim_{\rho\to 0}\lim_{k\to\infty}\frac{1}{k}\inf_{u\in {\overline{\mathcal U}}(Q_{h})}\{-J_{1}\sum_{i=0}^{k-2}(u^{i},u^{i+1})+\sum_{i=0}^{k-2}(u^{i},u^{i+2}),\  <u>_{1,Q_{h}}\in{\overline B(z,\rho)}\}.
\end{eqnarray}
Furthermore
\begin{itemize}
\item[(i)] if $J_{1}\geq 4$ then $f_{hom}(z)=-(J_{1}-1)$,
\item[(ii)] if $0<J_{1}\leq 4$ then the following estimate hold:
\begin{equation}
\frac{(J_{1}-4)^{2}}{8}|z|^{2}\leq f_{hom}(z)+(1+\frac{J_{1}^{2}}{8})\leq \frac{(J_{1}-4)^{2}}{8}|z|.
\end{equation}
Moreover there exists $h:[0,1]\to \R$ convex and monotone non-decreasing such that $f_{hom}(z)=h(|z|)$.
\item[(iii)] if $0<J_{1}\leq 4$ we have that $\min E(u)=E(0)=-(1-\frac{J_{1}^{2}}{8})$. 
\end{itemize}
\end{theorem}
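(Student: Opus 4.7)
The plan is to cast $E_n$ into the abstract setting of Theorem \ref{32-th:homog-Li} and then trap $f_{hom}$ between matching bounds built from the ferromagnetic and helimagnetic ground states of Proposition \ref{minimi}. First I set $m=2$, $K=S^{1}\subset\R^{2}$, and define $f^{\pm1}(u,v)=-\tfrac{J_{1}}{2}(u,v)$, $f^{\pm2}(u,v)=\tfrac12(u,v)$, $f^{\xi}\equiv 0$ for $|\xi|\notin\{1,2\}$, each extended to $+\infty$ off $K^{2}$; these $f^{\xi}$ satisfy (H1)--(H3). The associated functional $F_n$ differs from $E_n$ only by the $O(\lambda_n)$ boundary correction coming from \eqref{prelim:boundary}, which does not affect the $\Gamma$-limit, so Theorem \ref{32-th:homog-Li} gives the representation \eqref{Gamma-lim} with $f_{hom}$ as in \eqref{fhom}, and $f_{hom}$ is convex on $co(K)=B_{1}$.

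Symmetry then forces the radial structure. For every $R\in SO(2)$ one has $E_n(Ru)=E_n(u)$ and $\langle Ru\rangle_{1,Q_h}=R\langle u\rangle_{1,Q_h}$; passing to the infimum in \eqref{fhom} yields $f_{hom}(Rz)=f_{hom}(z)$, hence $f_{hom}(z)=h(|z|)$ for some $h:[0,1]\to\R$. Convexity of $h$ follows from convexity of $f_{hom}$ since $(tr_1+(1-t)r_2)e_1=t(r_1 e_1)+(1-t)(r_2 e_1)$; monotonicity of $h$ on $[0,1]$ is obtained by writing, for $0\le r_1\le r_2$, $r_1 e_1$ as the convex combination $t(-r_2 e_1)+(1-t)(r_2 e_1)$ with $t=(r_2-r_1)/(2r_2)\in[0,1/2]$, and using $f_{hom}(-r_2 e_1)=f_{hom}(r_2 e_1)=h(r_2)$.

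Part (i), $J_{1}\ge 4$: testing \eqref{fhom} with a constant $u\equiv\bar u\in S^{1}$ (which trivially satisfies \eqref{prelim:boundary}) gives per-site energy $-(J_{1}-1)$, hence $f_{hom}(\bar u)\le -(J_{1}-1)$; convexity of $f_{hom}$ and $B_{1}=co(S^{1})$ extend this bound to every $z\in B_{1}$. The matching lower bound is the unconstrained minimum, which equals $-(J_{1}-1)(k-1)$ by the argument of Remark \ref{rem:ferro}. Part (ii), $0<J_{1}\le 4$: for the upper bound I test \eqref{fhom} with $u^{i}=z/|z|$ on a fraction $|z|$ of the sites of $Q_h$ and $u^{i}=(\cos\varphi i,\sin\varphi i)$, $\cos\varphi=J_{1}/4$, on the complementary fraction $1-|z|$ (a short transition block of $O(1)$ sites restores \eqref{prelim:boundary}). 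By Proposition \ref{minimi} these blocks carry per-site energies $-(J_{1}-1)$ and $-(1+J_{1}^{2}/8)$ respectively, the mean tends to $z$ as $h\to\infty$, and a direct computation gives
\[
f_{hom}(z)+\bigl(1+\tfrac{J_{1}^{2}}{8}\bigr)\le\tfrac{(J_{1}-4)^{2}}{8}|z|.
\]
For the lower bound I rewrite the inner sum in \eqref{fhom} as the analogue of $H_n$ from \eqref{Hn} and apply Cauchy--Schwarz to $a_i:=u^{i+2}-\tfrac{J_{1}}{2}u^{i+1}+u^{i}$:
\[
\sum_{i=0}^{k-3}|a_i|^{2}\ \ge\ \frac{1}{k-2}\Bigl|\sum_{i=0}^{k-3}a_i\Bigr|^{2}\ =\ \frac{1}{k-2}\Bigl|\tfrac{4-J_{1}}{2}\sum_{j=0}^{k-1}u^{j}+O(1)\Bigr|^{2},
\]
where the $O(1)$ collects boundary contributions bounded by $|u^{j}|=1$. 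Dividing by $k$, using $\sum_j u^{j}=k\langle u\rangle_{1,Q_h}+O(1)$ with $\langle u\rangle_{1,Q_h}\in\bar B(z,\rho)$, and sending $k\to\infty$, then $\rho\to 0$, yields $f_{hom}(z)+(1+J_{1}^{2}/8)\ge\tfrac{(J_{1}-4)^{2}}{8}|z|^{2}$. Part (iii) then follows from monotonicity of $h$ together with $h(0)=-(1+J_{1}^{2}/8)$, which is forced by evaluating both estimates at $z=0$.

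The main obstacle is the lower bound in (ii): the Cauchy--Schwarz step must be arranged so that the leading telescoping term $(4-J_{1})(\sum u^{j})/2$, generated by $\sum_i a_i$, survives after the $O(1)$ boundary contributions (unavoidable because \eqref{prelim:boundary} is only a condition on scalar products, not a genuine periodicity) are absorbed into the $k\to\infty$ limit. A secondary difficulty is aligning the ferromagnetic and helimagnetic sectors in the upper bound of (ii) with \eqref{prelim:boundary}; this is handled by inserting a buffer block of bounded length, whose contribution is $O(1)$ and hence negligible after division by $k$.
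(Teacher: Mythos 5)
Your proposal is correct and reaches all the stated conclusions, but the two substantive estimates in (ii) are obtained by genuinely different arguments than the paper's. For the lower bound, you work directly inside the cell formula \eqref{fhom}: you complete the square to rewrite the per-cell energy as $\tfrac12|a_i|^2-(1+\tfrac{J_1^2}{8})$ with $a_i=u^{i+2}-\tfrac{J_1}{2}u^{i+1}+u^i$, apply Jensen/Cauchy--Schwarz to $\sum_i a_i$, and exploit the telescoping identity $\sum_i a_i=\tfrac{4-J_1}{2}\sum_j u^j+O(1)$ together with the average constraint. The paper instead argues at the level of sequences: it introduces the auxiliary interpolations $w_n^i=(u_n^i+u_n^{i+2})/2$ and $\hat u_n^i=u_n^{i+1}$, shows both converge weak-$*$ to the same limit $u$, and concludes by weak lower semicontinuity of the $L^2$ norm. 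The two routes are equivalent in outcome (the paper's gives the liminf inequality in integral form for arbitrary weak-$*$ limits, yours gives the pointwise bound on $f_{hom}$, and the abstract representation theorem bridges the two); yours is more elementary and self-contained, the paper's generalizes more readily to settings where no clean telescoping is available. For the upper bound, you build an explicit two-phase competitor (a ferromagnetic block on a fraction $|z|$ of sites and a helix on the rest), whereas the paper only computes $f_{hom}(0)$ (via the helix) and $f_{hom}$ on $S^1$ (via constants) and then interpolates using the convexity of $f_{hom}$ supplied by Theorem \ref{32-th:homog-Li}; your construction is essentially a hands-on proof of that same convex combination, at the price of having to check that the helical block has vanishing average (true for $J_1<4$ since the partial sums of $(\cos\varphi i,\sin\varphi i)$ are bounded) and of patching the endpoint constraint \eqref{prelim:boundary}, which, as you note, costs only $O(1)$. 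Your elementary derivation of the monotonicity of $h$ replaces the paper's citation of Rockafellar and is fine. Two minor remarks: the buffer you insert to restore \eqref{prelim:boundary} should be placed so that the first and last nearest-neighbour scalar products actually coincide (e.g.\ splitting the ferromagnetic block between the two ends), since \eqref{prelim:boundary} ties the two endpoints together rather than being a local condition; and your value $f_{hom}(0)=-(1+\tfrac{J_1^2}{8})$ is the one forced by the estimates in (ii) and by the paper's own proof, so the sign appearing in the statement of (iii) is evidently a typo that your argument silently corrects.
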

\begin{proof}
The formula in \eqref{fhom} follows applying Theorem \ref{32-th:homog-Li} in the special case 
\begin{eqnarray}
f^{\xi}(u,v)=\begin{cases}
-\frac{J_{1}}{2}(u,v) &\text {if } |\xi|=1,\\
\frac{1}{2}(u,v) &\text {if } |\xi|=2,\\
0 &\text {otherwise}\\
\end{cases}
\end{eqnarray}
and $K=S^{1}$. To prove (i) we notice that, as observed in Remark \ref{rem:ferro}, $E_{n}$ is minimized by constant $S^{1}$-valued functions and its minimum is $-(J_{1}-1)(1-c_{n}\lan)$. Since $E_{n}$ $\Gamma$-converges to $E$ given by \eqref{Gamma-lim} we have that 
\begin{eqnarray}
f_{hom}(z)\geq -(J_{1}-1), \quad \forall z\in B^{1},\\
f_{hom}(z)=-(J_{1}-1), \quad \forall z\in S^{1}.
\end{eqnarray}
By the convexity of $f_{hom}$, (i) follows.\\

We divide the proof of (ii) into the lower bound and the upper bound estimates.\\

\noindent {\bf Lower bound:} let $u_{n}\in C_{n}(I,S^{1})$ be such that $u_{n}\wtos u$ in $L^{\infty}(I,\R^{2})$, by \eqref{enhn} it is left to prove that 
\begin{equation}\label{stima1}
\liminf_{n}H_{n}(u_{n})\geq \frac{(J_{1}-4)^{2}}{8}\int_{I}|u(x)|^{2}\ dx.
\end{equation}
We define the functions $w_{n}$ to be piece-wise constant on the cells of the lattice and such that 
\begin{equation}
w_{n}^{i}=\begin{cases}\frac{u_{n}^{i}+u_{n}^{i+2}}{2}&\textit{if } i=0,\dots,[1/\lan]-2,\\
0&\textit{if }i=[1/\lan]-1.
\end{cases}
\end{equation}
Let us show that $w_{n}\wtos u$ in $L^{\infty}(I,\R^{2})$. Since $\sup_{n}\|w_{n}\|_{\infty}\leq 1$ and $u_{n}\wtos u$ in $L^{\infty}(I,\R^{2})$, it suffices to show that, for all $(a,b)\subset\subset I$ it holds 
\begin{equation}
\int_{a}^{b}(u_{n}(x)-w_{n}(x))\ dx\to 0.
\end{equation}
The above limit follows on observing that
\begin{eqnarray*}
\left|\int_{a}^{b}(u_{n}(x)-w_{n}(x))\ dx\right|&\leq& \left|\sum_{i=[a/\lan]}^{[b/\lan]}\lan(u^{i}_{n}-w^{i}_{n})\right|+o(1)\\&=&\frac{1}{2}\left|\sum_{i=[a/\lan]}^{[b/\lan]}\lan(u^{i}_{n}-u^{i+2}_{n})\right|+o(1)
\leq4\|u_{n}\|_{\infty}\lan+o(1)\to 0.
\end{eqnarray*}
We also need to define the functions $\hat u_{n}$ piece-wise constant on the cell of the lattice and such that $\hat u_{n}^{i}:=u_{n}^{i+1}$. An analogous computation as the one above shows that $\hat u_{n}\wtos u$ in $L^{\infty}(I,\R^{2})$. We now may write that 
\begin{equation}
H_{n}(u_{n})=2\sum_{i=0}^{[1/\lan]-2}\lan\left|\frac{u_{n}^{i}+u_{n}^{i+2}}{2}-\frac{J_{1}}{4}u_{n}^{i+1}\right|^{2}\geq 2\int_{I}|w_{n}(x)-\tfrac{J_{1}}{4}\hat u_{n}(x)|^{2}\ dx+o(1).
\end{equation} 
By the weak lower semicontinuity of the $L^{2}$ norm we deduce \eqref{stima1}.\\

\noindent {\bf Upper bound:} we first prove that $f_{hom}(0)=-(1-\frac{J_{1}^{2}}{8})$. Using the already proved lower bound in $(ii)$ it is left to show that $f_{\hom}(0)\leq-(1-\frac{J_{1}^{2}}{8})$. To this end we construct the sequence of piecewise-constant functions $u_{n}$ on the cells of the lattice such that $u_{n}^{i}=(\cos\varphi i,\sin\varphi i)$. It holds that $u_{n}\wtos 0$ and moreover, as shown in Proposition \ref{minimi} $E_{n}(u_{n})=(1-c_{n}\lan)(-1-\frac{J_{1}^{2}}{8})$. As a result
\begin{equation}
f_{hom}(0)=\int_{I}f_{hom}(0)\ dx\leq \liminf_{n} E_{n}(u_{n})=-1-\frac{J_{1}^{2}}{8}.
\end{equation}
We now prove the upper bound for $z\in S^{1}$. Let us consider a constant sequence $u_{n}=z$. Using formula \eqref{energy_{3}} and \eqref{enhn}, we have that 
\begin{equation*}
E_{n}(u_{n})=(1-c_{n}\lan)(-1-\frac{J_{1}^{2}}{8}+\frac{(J_{1}-4)^{2}}{8}). 
\end{equation*}
Arguing as before, it follows that, for all $z\in S^{1}$, $f_{hom}(z)+(1+\frac{J_{1}^{2}}{8})\leq \frac{(J_{1}-4)^{2}}{8}$.

Now for all $z\in B^{1}$ the upper bound follows by the convexity of $f_{hom}$. \\

Finally, by the definition of $f_{hom}$ it follows that, for all $z\in B^{1}$ $f_{hom}(Rz)=f_{hom}(z)$ for all $R\in SO(2)$. As a consequence of this and \cite[Corollary 12.3.1 and Example below]{Roc} we also get that $f_{hom}(z)=h(|z|)$ for some $h:[0,1]\to \R$ convex and monotone non-decreasing. Eventually (iii) follows by (ii). \end{proof}
\begin{remark}
We notice that $0$ is the unique minimizer of $f_{hom}$, in all the cases when the $\Gamma$-limit is non trivial, that is for  $0<J_{1}<4$.
\end{remark}

\end{subsection}

\section{Renormalization of the energy close to the ferromagnetic state and chirality transitions}\label{sec:grad}

In this section, motivated by the study of spin systems close to the helimagnet/ferromagnet transition point, we let the ferromagnetic  interaction parameter $J_{1}$ be scale dependent and approach the transition value $4$ from below. Namely we set $J_{1}=J_{1,n}=4(1-\delta_{n})$ for some $\den>0$, $\delta_{n}\to 0$. We then perform a renormalization of the energy $E_{n}$ and introduce a new functional whose asymptotic behavior will better describe the ground states of the system. More precisely we define $E_{n}^{hf}:L^{\infty}(I,\R^{2})\to(-\infty,+\infty]$ and $H_{n}^{hf}:L^{\infty}(I,\R^{2})\to[0,+\infty]$as:
\begin{equation}\label{Endelta}
E^{hf}_{n}(u):=\begin{cases}
\dps-4(1-\den)\hspace{-.3cm}\sum_{i=0}^{\left[1/\lan\right]-2}\lan(u^{i},u^{i+1})+\hspace{-.3cm}\sum_{i=0}^{{\left[1/\lan\right]-2}}\lan(u^{i},u^{i+2})&\textit{if }u\in C_n(I,S^{1})\\
+\infty&\textit{otherwise.}
\end{cases}
\end{equation}
\begin{equation}\label{Hndelta}
H^{hf}_{n}(u):=\begin{cases}
\frac{1}{2}\sum_{i}\lan\left|u^{i+2}-2(1-\den)u^{i+1}+u^{i}\right|^{2}&\textit{if }u\in C_n(I,S^{1})\\
+\infty&\textit{otherwise.}
\end{cases}
\end{equation}
Note that by Theorem \ref{zeroorder} it holds
\begin{equation}\label{Ehf-to-Hhf}
H^{hf}_{n}(u)=E^{hf}_{n}(u)-\min E_{n}^{hf}=E^{hf}_{n}(u)+(3-4\den+2\den^{2})(1-c_{n}\lan)
\end{equation}
\begin{proposition}\label{trivial-limit}
Let $E_{n}^{hf}:L^\infty(I,S^{1})\to(-\infty,+\infty]$ be the functional in \eqref{Endelta}. Then $\Gamma\hbox{-}\lim_{n}E_{n}(u)$ with respect to the weak-$*$ convergence in $L^{\infty}$ is given by 
\begin{equation*}
E(u):=\begin{cases}
-3&\textit{if } |u|\leq 1,\\
+\infty&\textit{otherwise in } L^{\infty}(I,\R^{2}).
\end{cases}
\end{equation*}
\end{proposition}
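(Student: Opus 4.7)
The plan is to view $E_n^{hf}$ as a uniformly vanishing bounded perturbation of the fixed-parameter energy $E_n^{(J_{1}=4)}$ from Remark \ref{rem:ferro}, and thereby reduce the computation to the zero-order result already established in Theorem \ref{zeroorder}. Concretely, for $u\in C_n(I,S^1)$ I would write
\[
E_n^{hf}(u)=E_n^{(J_{1}=4)}(u)+4\delta_n\sum_{i=0}^{[1/\lan]-2}\lan(u^i,u^{i+1}),
\]
and observe that the second summand is bounded in absolute value by $4\delta_n(1-c_n\lan)$, hence is $O(\delta_n)$ uniformly in $u$.

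Next, I would invoke Theorem \ref{zeroorder} at the transition value $J_1=4$: case (i) applies (its hypothesis $J_1\ge 4$ covers the threshold) and returns $f_{hom}\equiv -(J_1-1)=-3$; equivalently, the sandwich in case (ii) collapses since both outer terms vanish at $J_1=4$, giving $f_{hom}(z)=-(1+J_1^2/8)=-3$ on $\overline B_1$. Hence $E_n^{(J_1=4)}$ $\Gamma$-converges, with respect to the weak-$*$ topology of $L^\infty(I,\R^2)$, to the constant functional $-3$ on $\{|u|\le 1\}$, extended to $+\infty$ elsewhere (weak-$*$ $L^\infty$ limits of $S^1$-valued maps lie in $co(S^1)=\overline B_1$, so no sequence in $C_n(I,S^1)$ can converge weakly-$*$ to a $u$ exceeding this ball on a set of positive measure).

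To pass from $E_n^{(J_1=4)}$ to $E_n^{hf}$, the decomposition yields $|E_n^{hf}(u_n)-E_n^{(J_1=4)}(u_n)|\le 4\delta_n$ for every admissible $u_n$. For the liminf inequality, given $u_n\wtos u$ with $|u|\le 1$,
\[
\liminf_n E_n^{hf}(u_n)\ge \liminf_n E_n^{(J_1=4)}(u_n)-\lim_n 4\delta_n\ge -3,
\]
while any recovery sequence $u_n\wtos u$ for $E_n^{(J_1=4)}$ (provided by Theorem \ref{zeroorder}) satisfies $E_n^{hf}(u_n)=E_n^{(J_1=4)}(u_n)+O(\delta_n)\to -3$, giving the limsup.

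No genuine technical obstacle arises here: the statement is essentially automatic once it is recast as a vanishing perturbation of the zero-order limit. The role of the proposition is rather conceptual, as it confirms that the first-order $\Gamma$-limit is blind to the chirality degeneracy of the ground states, thereby motivating the higher-order rescaling $H_n^{hf}/\mu_n$ studied in the sequel.
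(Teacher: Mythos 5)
Your proposal is correct and follows exactly the paper's own argument: the paper likewise observes that $|E^{hf}_{n}(u)-E^{(J_{1}=4)}_{n}(u)|\leq 4\delta_{n}$ uniformly on $C_{n}(I,S^{1})$ and concludes directly from Theorem \ref{zeroorder}. Your write-up merely spells out the liminf/limsup stability under this uniformly vanishing perturbation, which the paper leaves implicit.
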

\begin{proof}
Observing that for all $u\in C_{n}(I,S^{1})$ it holds that
\begin{equation}
|E^{hf}_{n}(u)-E^{{J_{1}=4}}_{n}|\leq 4\den,
\end{equation}
the result immediately follows by Theorem \ref{zeroorder}.\end{proof}\\

\noindent In what follows we will define a convenient order parameter such that the $\Gamma$-limit of a scaled version $H_{n}^{hf}$ is given by a functional penalizing the helimagnetic transition around the ferromagnetic state.\\

We first introduce the order parameter. Given $u_{n}\in C_{n}(I,S^{1})$, for $i=0,1,\dots,[1/\lan]-1$ we associate to each $u_{n}^{i},u_{n}^{i+1}$ the corresponding oriented central angle $\theta_{n}^{i}\in[-\pi,\pi)$ given by 
\begin{equation}
\theta_{n}^{i}:=\chi[u_{n}^{i},u_{n}^{i+1}]\arccos((u_{n}^{i},u_{n}^{i+1})).
\end{equation}
We now set
\begin{equation}\label{wn}
w_{n}^{i}=\sin\left(\frac{\theta_{n}^{i}}{2}\right)\,.
\end{equation}
We eventually define the order parameter of our problem as
\begin{equation}\label{zn}
z_{n}^{i}={\sqrt{\frac{2}\den}}w_{n}^{i}.
\end{equation}
Note that, the above procedure defines $T_{n}:u_{n}\mapsto z_{n}$ which associate to any given $u_{n}\in C_{n}(I,S^{1})$ a piecewise-constant function $z_{n}\in \tilde C_{n}(I,\R)$ where 
\begin{eqnarray*}
\tilde C_{n}(I,\R):=\{z:[0,1)\to\R:z(x)=z^{i}_{n}, \hbox{ if } x \in \lan\{i+[0,1)\},\,i=0,1,\dots,[1/\lan]-1\}
\end{eqnarray*}
with  $z_{n}$ as in \eqref{zn}. 
We observe that if $z_{n}=T_{n}(u_{n})=T_{n}(v_{n})$ then $u_{n}$ and $v_{n}$ differ by a constant rotation (possibly depending on $n$) so that $H^{hf}_{n}(u_{n})=H^{hf}_{n}(v_{n})$. Therefore, with a slight abuse of notation, we now regard $H^{hf}_{n}$ as a functional defined on $z\in L^{1}(I,\R)$ by 
\begin{equation}\label{hnhf}
H^{hf}_{n}(z)=\begin{cases}
H^{hf}_{n}(u),&\textit{if } z\in\tilde C_{n}(I,\R)\\
+\infty   &\textit{ otherwise.}
\end{cases}
\end{equation}
Note that in the definition above, $u$ is any function such that $T_{n}u=z$. As a consequence, it will be natural to state the $\Gamma$-convergence theorem considering the convergence with respect to the order parameter $z$. 
\begin{theorem}\label{maintheorem}
Let $H_{n}^{hf}:L^1(I,\R)\to[0,+\infty]$ be the functional in \eqref{hnhf}. Assume that there exists
$l:=\lim_{n}\lan/(2\den)^{1/2}$. Then $H^{hf}(z):=\Gamma\hbox{-}\lim_{n}H_{n}^{hf}(z)/(\sqrt{2}\lan\den^{3/2})$ with respect to the $L^{1}(I)$ convergence is given by one of the following formulas:
\begin{itemize}
\item[(i)] if $l=0$ 
\begin{equation}\label{Gamma-lim_linfty}
H^{hf}(z):=\begin{cases}
\frac{8}{3}\#(S(z))&\textit {if } z\in BV(I,\{\pm 1\}), \\
+\infty&\textit{otherwise.}
\end{cases}
\end{equation}
\item [(ii)] if $l\in(0,+\infty)$ 
\begin{equation}
H^{hf}(z):=\begin{cases}
\frac1{l}\int_{I}(z^{2}(x)-1)^{2}\ dx+{l}\int_{I}(z'(x))^{2}\ dx &\textit {if } z\in W_{|per|}^{1,2}(I), \\
+\infty&\textit{otherwise,}
\end{cases}
\end{equation}

where we have set $W^{1,2}_{|per|}(I):=\{z\in W^{1,2}(I),\ s.t.\ |z(0)|=|z(1)|\}$.\\
\item[(iii)] if $l=+\infty$ 
\begin{equation}\label{Gamma-lim_l=0}
H^{hf}(z):=\begin{cases}
0&\textit {if } z=const, \\
+\infty&\textit{otherwise.}
\end{cases}
\end{equation}
\end{itemize}
\end{theorem}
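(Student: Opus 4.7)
The strategy is to rewrite $H_n^{hf}$ in terms of the order parameter $z_n$ so that, after renormalization, it becomes a discrete Modica--Mortola functional, and then to treat the three regimes separately. Set $l_n:=\lambda_n/\sqrt{2\delta_n}$. A direct computation using $|u_n^j|=1$ gives
\begin{equation*}
|u^{i+2}-2(1-\delta_n)u^{i+1}+u^i|^2=2+4(1-\delta_n)^2+2\cos(\theta_n^i+\theta_n^{i+1})-4(1-\delta_n)[\cos\theta_n^i+\cos\theta_n^{i+1}],
\end{equation*}
which, after systematic use of half-angle identities and the definition $w_n^i=\sin(\theta_n^i/2)$, rearranges exactly to
\begin{equation*}
4\delta_n^2+4(w_n^{i+1}-w_n^i)^2-8\delta_n[(w_n^i)^2+(w_n^{i+1})^2]+16\,w_n^i w_n^{i+1}\sin^2\bigl(\tfrac{\theta_n^i+\theta_n^{i+1}}{4}\bigr).
\end{equation*}
Substituting $w_n^i=\sqrt{\delta_n/2}\,z_n^i$ and expanding the last trigonometric factor, whose arguments are $O(\sqrt{\delta_n})$ along bounded-energy sequences, splits this quantity into the leading term $2\delta_n(z_n^{i+1}-z_n^i)^2+4\delta_n^2(1-(z_n^i)^2)(1-(z_n^{i+1})^2)$ plus a remainder whose contribution, once divided by $\sqrt{2}\lambda_n\delta_n^{3/2}$, is $o(1)$. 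This reduces the renormalized energy to
\begin{equation*}
\frac{H_n^{hf}(z_n)}{\sqrt{2}\lambda_n\delta_n^{3/2}}=l_n\sum_i\lambda_n\Bigl(\tfrac{z_n^{i+1}-z_n^i}{\lambda_n}\Bigr)^2+\tfrac{1}{l_n}\sum_i\lambda_n W(z_n^i)+o(1),\qquad W(s)=(1-s^2)^2.
\end{equation*}

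\textbf{Analysis of the three regimes.} For case (i), $l_n\to 0$, the assumptions of Theorem \ref{braides-yip} hold with $\alpha_n=\beta_n=l_n$, since $\lambda_n/l_n=\sqrt{2\delta_n}\to 0$; the $\Gamma$-limit is therefore $c_W\#S(z)$ on $BV(I;\{\pm 1\})$, with $c_W=8/3$ by Remark \ref{rem:W}. For case (iii), $l_n\to\infty$, any sequence with bounded renormalized energy satisfies $\sum_i\lambda_n((z_n^{i+1}-z_n^i)/\lambda_n)^2\le C/l_n\to 0$; combined with an $L^1$-bound this forces the limit to be a.e.\ constant, and any constant is attained by a constant recovery sequence with vanishing potential contribution. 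For case (ii), $l_n\to l\in(0,\infty)$, the rewritten energy is a convergent discretization, at mesh $\lambda_n\to 0$, of the continuum Modica--Mortola functional $l\int_I(z')^2\,dx+\frac{1}{l}\int_I W(z)\,dx$: compactness in $W^{1,2}$ is obtained from the uniform discrete $W^{1,2}$-bound together with the potential bound (the latter pins $|z_n|$ close to $1$ on a large set, which controls $\|z_n\|_\infty$); the liminf inequality follows from weak lower semicontinuity applied to the piecewise-affine interpolants; the limsup is obtained by sampling a smooth target $z$ on the lattice and reconstructing $u_n$ by iterated rotations.

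\textbf{Boundary condition and main obstacle.} The periodic condition $(u_n^1,u_n^0)=(u_n^{[1/\lambda_n]},u_n^{[1/\lambda_n]-1})$ is equivalent to $\cos\theta_n^0=\cos\theta_n^{[1/\lambda_n]-1}$, hence $|z_n^0|=|z_n^{[1/\lambda_n]-1}|$, which passes to the limit as $|z(0)|=|z(1)|$ in case (ii) (and is automatic in cases (i) and (iii)). The recovery sequence in case (ii) is built by setting $\theta_n^i=2\arcsin(\sqrt{\delta_n/2}\,z(i\lambda_n))$ and reconstructing $u_n$ by composing rotations, with a global chirality choice ensuring the discrete boundary match. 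The main technical obstacle lies in Step 1: to show that the remainder terms in the trigonometric expansion (the $\sin^2((\theta_n^i+\theta_n^{i+1})/4)$ cross-terms and the $-8\delta_n[(w_n^i)^2+(w_n^{i+1})^2]$ correction) contribute at scale $o(\sqrt{2}\lambda_n\delta_n^{3/2})$, one needs two-sided a priori estimates showing that both $|\theta_n^i|$ and the discrete angular differences are of the expected orders $\sqrt{\delta_n}$ and $\sqrt{\delta_n}\lambda_n$, respectively. A related subtlety, specific to case (ii), is that the map $T_n$ loses information about a global rotation---any two configurations with the same $z_n$ differ by a fixed rotation---so the reconstruction of $u_n$ from $z_n$ must propagate a single such rotation along the chain, and it is precisely this fact that forces the constraint $|z(0)|=|z(1)|$ rather than $z(0)=z(1)$.
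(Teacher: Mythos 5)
Your proposal is correct and takes essentially the same route as the paper: pass to the order parameter $z_n$, reduce the renormalized energy to a discrete Modica--Mortola functional (your exact per-bond identity is a closed-form version of the paper's asymptotic expansion via the limit \eqref{limit} together with the a priori bound of Proposition \ref{apriori}), then apply Theorem \ref{braides-yip} for $l=0$, weak $W^{1,2}$ compactness of the piecewise-affine interpolants for $l\in(0,\infty)$, and the degenerate gradient bound for $l=\infty$. The lifting of recovery sequences to admissible spin fields via cumulative rotations $\varphi^i=2\sum_{j}\arcsin(w^j)$, and the resulting constraint $|z(0)|=|z(1)|$, likewise match the paper's construction.
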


In the following proposition we consider an equi-bounded sequence of spins and obtain a first bound on the scalar product between neighbors. 

\begin{proposition}\label{apriori}
Let $\mun\to 0$ and let $u_{n}$ be such that 
\begin{equation}\label{equibound}
\sup_{n}H_{n}(u_{n})\leq C\lan\mun,
\end{equation}
then for all $i$ we have that
\begin{equation}\label{stimaapriori-punt}
|\frac{J_{1}}{4}-(u_{n}^{i},u_{n}^{i+1})|\leq \sqrt{C}\left(\frac2{J_{1}}+\frac{1}{2}\right)\mun^{1/2}
\end{equation}

\end{proposition}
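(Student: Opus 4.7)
The plan is to start from a pointwise consequence of the hypothesis. Since $H_n$ is a sum of nonnegative terms all sharing the prefactor $\lan/2$, the bound $H_n(u_n)\le C\lan\mun$ forces each single discrete residual
\begin{equation*}
v_n^i := u_n^{i+2} - \frac{J_1}{2}\, u_n^{i+1} + u_n^i
\end{equation*}
to satisfy $|v_n^i|^2 \le 2C\mun$, hence $|v_n^i|\le \sqrt{2C}\,\mun^{1/2}$ uniformly in $i$. This is the basic pointwise ingredient I would rely on.

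Next, I would test the identity defining $v_n^i$ against the three unit vectors $u_n^i$, $u_n^{i+1}$, $u_n^{i+2}$. Setting $a_i := (u_n^i, u_n^{i+1})$ and $a_{i+1}:=(u_n^{i+1},u_n^{i+2})$, and exploiting $|u_n^j|=1$, the Cauchy--Schwarz inequalities $|(v_n^i, u_n^j)| \le |v_n^i|$ become three scalar estimates. Projecting along $u_n^{i+1}$ gives directly
\begin{equation*}
\Bigl|a_i + a_{i+1} - \tfrac{J_1}{2}\Bigr| \le \sqrt{2C}\,\mun^{1/2},
\end{equation*}
while projecting along $u_n^i$ and along $u_n^{i+2}$ produces two further inequalities in which the same mixed term $(u_n^i,u_n^{i+2})$ appears with the same coefficient, so that their difference eliminates it and yields
\begin{equation*}
\tfrac{J_1}{2}|a_i - a_{i+1}| \le 2\sqrt{2C}\,\mun^{1/2},\qquad \text{i.e.}\qquad |a_i - a_{i+1}| \le \tfrac{4}{J_1}\sqrt{2C}\,\mun^{1/2}.
\end{equation*}

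Finally, combining these two bounds via the trivial identity $a_i - J_1/4 = \tfrac12(a_i - a_{i+1}) + \tfrac12(a_i + a_{i+1} - J_1/2)$ and the triangle inequality produces the announced pointwise estimate
\begin{equation*}
\Bigl|a_i - \tfrac{J_1}{4}\Bigr| \le \sqrt{2C}\Bigl(\tfrac{2}{J_1} + \tfrac{1}{2}\Bigr)\mun^{1/2},
\end{equation*}
which agrees with the statement of the proposition up to the constant under the square root (this can be absorbed by redefining the generic constant $C$). There is no substantial obstacle; the only mildly delicate point is that, once $|v_n^i|$ is bounded pointwise, one must recover the single scalar $a_i$ from three scalar projections. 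The corresponding linear system degenerates only in the limit $J_1\to 0$, and it is precisely this degeneracy that accounts for the blow-up factor $2/J_1$ appearing in the final estimate.
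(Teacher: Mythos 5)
Your argument is correct, but it takes a genuinely different route from the paper's. Both proofs begin the same way, extracting the termwise bound $|v_n^i|\le c\,\mun^{1/2}$ on the residual $v_n^i=u_n^{i+2}-\frac{J_1}{2}u_n^{i+1}+u_n^i$ from the nonnegative sum defining $H_n$; they diverge in how they pass from that bound to the scalar product. The paper applies the reverse triangle inequality $|v_n^i|\ge \bigl|\,|u_n^i-\frac{J_1}{2}u_n^{i+1}|-1\bigr|$, upgrades it to a bound on $\bigl(|u_n^i-\frac{J_1}{2}u_n^{i+1}|^2-1\bigr)^2$, and then reads off the conclusion from the exact identity $|u_n^i-\frac{J_1}{2}u_n^{i+1}|^2-1=J_1\bigl(\frac{J_1}{4}-(u_n^i,u_n^{i+1})\bigr)$; the constant $\frac{2}{J_1}+\frac12$ arises there as $\frac{1}{J_1}\bigl(2+\frac{J_1}{2}\bigr)$. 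You instead project $v_n^i$ onto the three unit vectors $u_n^i,u_n^{i+1},u_n^{i+2}$, obtain a small linear system in $a_i$, $a_{i+1}$ and the NNN product $(u_n^i,u_n^{i+2})$, eliminate the latter, and average the two resulting estimates, landing on the same constant. The paper's route is shorter (one inequality plus one algebraic identity, no elimination), while yours is more systematic: it makes transparent that the blow-up as $J_1\to0$ comes from the degeneracy of the projection system, and it yields as a by-product the slightly sharper oscillation bound $|a_i-a_{i+1}|\le\frac{4}{J_1}\sqrt{2C}\,\mun^{1/2}$ between consecutive NN products, which the paper's argument does not produce directly. The extra $\sqrt2$ in your final constant is immaterial: the paper's own bookkeeping silently drops the same factor $\frac12$ sitting in front of the sum in the definition of $H_n$, and only the order $\mun^{1/2}$ is used in the sequel.
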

\begin{proof}
Since for all $i$ we have that 
\begin{equation*}
\left|u^{i+2}-\frac{J_{1}}{2}u^{i+1}+u^{i}\right|^{2}\geq \left(\left|u^{i}-\frac{J_{1}}{2}u^{i+1}\right|-1\right)^{2},
\end{equation*}
by \eqref{equibound} and the definition of $H_{n}$ we have that
\begin{equation*}
\sum_{i}\lan\left(\left|u^{i}-\frac{J_{1}}{2}u^{i+1}\right|-1\right)^{2}\leq C\lan\mun
\end{equation*}
which implies that, for all $i$,
\begin{equation*}
\left(\left|u^{i}-\frac{J_{1}}{2}u^{i+1}\right|-1\right)^{2}\leq C\mun.
\end{equation*}
As a result we have that
\begin{equation*}
\left(\left|u^{i}-\frac{J_{1}}{2}u^{i+1}\right|^{2}-1\right)^{2}\leq C\left(2+\frac{J_{1}}{2}\right)^{2}\mun.
\end{equation*}
By an explicit computation we finally get \eqref{stimaapriori-punt}. 
\end{proof}

\begin{proof}[Proof of Theorem \ref{maintheorem}] 
We prove the theorem only in cases $(i)$ and $(ii)$, since the proof of $(iii)$ involves only minor changes of the arguments we need in the other two cases.\\

\noindent Let us consider a sequence $z_n\in \tilde C_{n}(I,\R)$ such that $\sup_{n}\frac{H_{n}^{hf}(z_{n})}{\lan\den^{3/2}}\leq C <+\infty$. Equivalently there is a sequence $u_{n}\in C_{n}(I,S^{1})$ satisfying $\sup_{n}\frac{H_{n}^{hf}(u_{n})}{\lan\den^{3/2}}\leq C <+\infty$. We claim that 

\begin{equation}\label{liminf}
\frac{H^{hf}_{n}(u_{n})}{\sqrt{2}\lan\den^{3/2}}\geq
\frac{\sqrt{2}\den^{1/2}}{\lan}
\sum_{i=0}^{[1/\lan]-2}\lan\left((z_{n}^{i})^{2}-1\right)^{2}+\frac{\lan}{\sqrt{2}\den^{1/2}}(1-\gamma_{n})\hspace{-.3cm}\sum_{i=0}^{[1/\lan]-2}\lan\left(\frac{z_{n}^{i+1}-z_{n}^{i}}{\lan}\right)^{2}
\end{equation}
for some $\gamma_{n}\to 0$. 
Associating to each $u_{n}^{i}$ the angles $\theta^{i}_{n}$ and the functions $w^{i}_{n}$ introduced in \eqref{wn}, by means of the trigonometric identity $1-\cos(2x)=2\sin^{2}(x)$ we can write that
\begin{eqnarray*}
1-(u_{n}^{i},u_{n}^{i+1})&=&1-\cos(\theta_{n}^{i})=2\sin^{2}\left(\frac{\theta_{n}^{i}}{2}\right)=2(w_{n}^{i})^{2}\\
1-(u_{n}^{i},u_{n}^{i+2})&=&1-\cos(\theta_{n}^{i+1}+\theta_{n}^{i})\,.
\end{eqnarray*}
By Lemma \ref{apriori} with $\mun=\den^{\frac32}$ there exists a constant $C'$ such that
\begin{equation}\label{conv-zero}
1-(u_{n}^{i},u_{n}^{i+1})\leq C'\den ^{\frac34},
\end{equation}
so that in particular $\theta_{n}^{i}\to 0$.

Introducing the function $w_{n}$ and the angles $\theta_{n}$, by\eqref{cn} and \eqref{Ehf-to-Hhf} we may rewrite $H^{hf}_{n}(u_{n})$ as follows
\begin{eqnarray}\label{Hn_w}
H^{hf}_{n}(u_{n})&=&4(1-\den)\sum_{i=0}^{[1/\lan]-2}\lan(1-(u_{n}^{i},u_{n}^{i+1}))-\sum_{i=0}^{[1/\lan]-2}\lan(1-(u_{n}^{i},u_{n}^{i+2}))\nonumber \\&&+2\den^{2}(1-c_{n}\lan)\nonumber \\
&=&8(1-\den)\sum_{i=0}^{[1/\lan]-2}\lan(w_{n}^{i})^{2}-\sum_{i=0}^{[1/\lan]-2}\lan(1-\cos(\theta_{n}^{i+1}+\theta_{n}^{i}))\nonumber \\&&+2\den^{2}(1-c_{n}\lan)\nonumber .
\end{eqnarray}
We further point out the following identities:
\begin{eqnarray*}
&\vphantom{\displaystyle\sum_{i=0}^{[1/\lan]-2}}
4(w_{n}^{i})^{2}-\sin^2(\theta_{n}^{i})=4(w_{n}^{i})^{4},\\
&\nonumber \displaystyle
2\hspace{-.35cm}\sum_{i=0}^{[1/\lan]-2}\hspace{-.35cm}\lan\sin^{2}(\theta_{n}^{i})=\hspace{-.35cm}\sum_{i=0}^{[1/\lan]-2}\hspace{-.35cm}\lan(\sin^{2}(\theta_{n}^{i})+\sin^{2}(\theta_{n}^{i+1})).
\end{eqnarray*}
The first one comes from the trigonometric identity $4\sin^{2}(x)-\sin^2(2x)=4\sin^4(x)$ while the second follows from the boundary condition \eqref{boundary}. Moreover the following limit holds true
\begin{equation}\label{limit}
\lim_{\substack{(x,y)\to(0,0)\\x\not=y}}\frac{\sin^{2}(x)+\sin^{2}(y)-(1-\cos(x+y))}{(\sin(x/2)-\sin(y/2))^{2}}=2
\end{equation}
upon observing that 
\begin{equation*}
\sin^{2}(x)+\sin^{2}(y)-(1-\cos(x+y))=(\sin(x)-\sin(y))^{2}-(1-\cos(x-y)).
\end{equation*}
We can therefore continue estimate $H^{hf}_{n}(u_{n})$ as
\begin{eqnarray}
H^{hf}_{n}(u_{n})&=&\sum_{i=0}^{[1/\lan]-2}\lan(8(w_{n}^{i})^{2}-2\sin^{2}(\theta_{n}^{i}))-8\den\sum_{i=0}^{[1/\lan]-2}\lan(w_{n}^{i})^{2}\nonumber \\&&+ 2\sum_{i=0}^{[1/\lan]-2}\lan\sin^{2}(\theta_{n}^{i})-\sum_{i=0}^{[1/\lan]-2}\lan(1-\cos(\theta_{n}^{i+1}+\theta_{n}^{i}))\nonumber \\&&+2\den^{2}(1-c_{n}\lan)\nonumber \\
&=&8\sum_{i=0}^{[1/\lan]-2}\lan\left((w_{n}^{i})^{4}-\den(w_{n}^{i})^{2}+\frac{\den^{2}}{4}\right)
+ \sum_{i=0}^{[1/\lan]-2}\lan \left(\sin^{2}(\theta_{n}^{i+1})+\sin^{2}(\theta_{n}^{i})\right)\nonumber \\ \nonumber&&-\sum_{i=0}^{[1/\lan]-2}\lan(1-\cos(\theta_{n}^{i+1}+\theta_{n}^{i})) \nonumber\\
&\geq&8\sum_{i=0}^{[1/\lan]-2}\lan\left((w_{n}^{i})^{2}-\frac{\den}{2}\right)^{2}\nonumber + 2(1-\gamma_{n})\sum_{i=0}^{[1/\lan]-2}\lan (w_{n}^{i+1}-w_{n}^{i})^{2}, \nonumber
\end{eqnarray}
for some $\gamma_{n}\to 0$. The last inequality is a consequence of \eqref{limit} once we recall that, by \eqref{conv-zero}, $\theta_{n}\to 0$ uniformly. In terms of $z_{n}$ the inequality in \eqref{Hn_w} becomes:
\begin{eqnarray}
H^{hf}_{n}(u_{n})&\geq&2\den^{2}\sum_{i=0}^{[1/\lan]-2}\lan\left((z_{n}^{i})^{2}-1\right)^{2} + (1-\gamma_{n})\den\sum_{i=0}^{[1/\lan]-2}\lan (z_{n}^{i+1}-z_{n}^{i})^{2}. \nonumber
\end{eqnarray}
The claim \eqref{liminf} is proved on dividing by $\sqrt{2}\lan\den^{3/2}$. \\

The claim implies the liminf inequality both in case $(i)$ and $(ii)$. In case $(i)$ it is obtained applying Theorem \ref{braides-yip} and Remark \ref{rem:W}. For what concerns $(ii)$, it suffices to observe that piecewise affine interpolations of the sequence $z_{n}$ associated to an equibounded $u_{n}$ are, in this case, weakly compact in $W_{|per|}^{1,2}(I)$ so that the lower bound follows by standard lower semicontinuity. \\

\medskip

\noindent In order to prove the limsup inequality we separately discuss cases $(i)$ and $(ii)$. \\

\noindent Case$(i)$. By the locality of the construction it suffices to exhibit a recovery sequence for $z=-\chi_{(0,1/2]}+\chi_{(1/2,1)}$.
As it is well known, (see for example \cite{Mod-Mor}, \cite{modica}) $z_{min}(t)=\tanh(t)$ is the solution of the following problem
\begin{equation}
\min\left\{\int_{-\infty}^{+\infty}((z'(t))^{2}+(z(t)^{2}-1)^{2})\ dt, z\in W^{1,2}(\R),\ z(\pm\infty)=\pm 1\right\}=:m
\end{equation}
and, by a direct computation, the above minimum is $m=\frac{8}{3}$. For all $\e>0$ there exists $R_{\epsilon}>0$ such that 
\begin{eqnarray}\label{prop-recovery}
\max\{\sup_{t\in(-\infty,-R_{\e})}|z_{min}(t)+1|,\sup_{t\in(R_{\e},+\infty)}|z_{min}(t)-1|\}\leq \e\\
\int_{-R_{\e}}^{+R_{\e}}((z_{min}'(t))^{2}+(z_{min}(t)^{2}-1)^{2})\leq m+\e\nonumber
\end{eqnarray}
Let us define $z_{\e}:\R\to\R$ as the odd $C^{1}$ function such that 
\begin{equation*}
z_{\e}(t):=\begin{cases}
z_{min}(t)&\textit{if }t\in[0,R_{\e}],\\
p_{\e}(t)&\textit{if }t\in(R_{\e},R_{\e}+\e),\\
1&\textit{if }t\in(R_{\e}+\e,+\infty),
\end{cases}
\end{equation*}
where $p_{\e}$ is a suitable third order interpolating polynomial that we may choose such that $\|p_{\e}'\|_{\infty}\leq 2$. 
\begin{figure}
\begin{center}
\includegraphics[scale=.65 ]{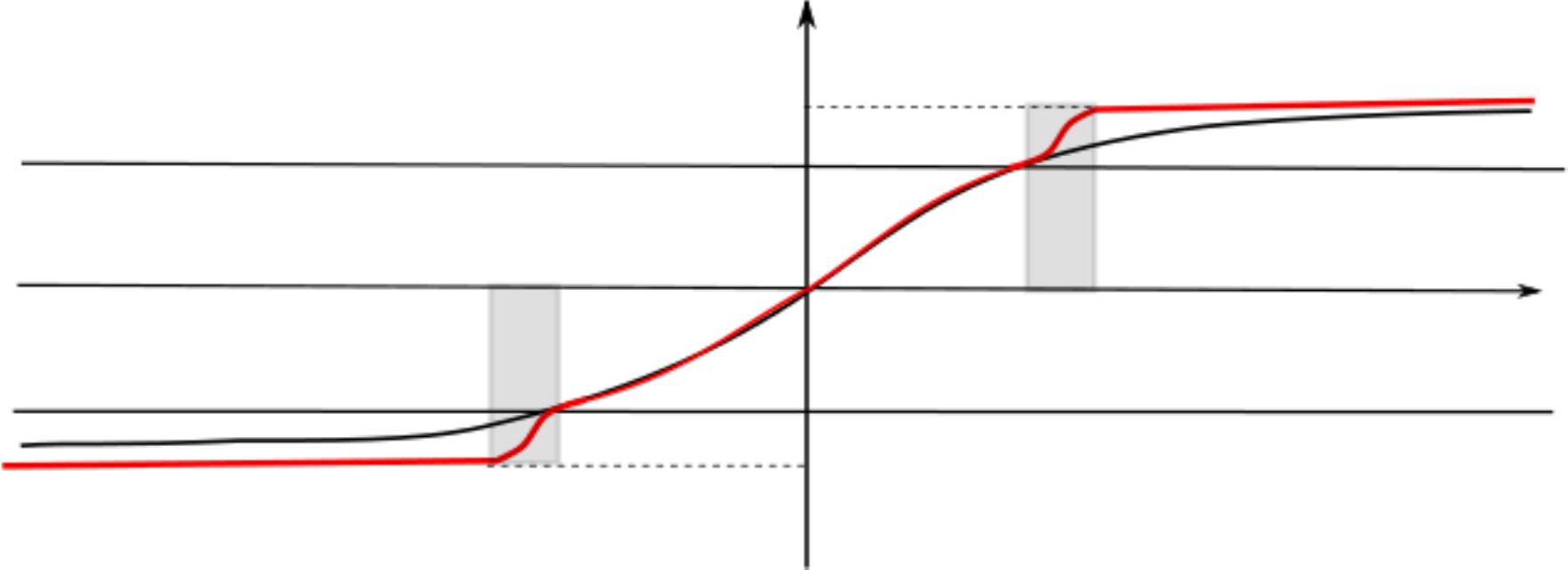}
\caption{In black the function $\tanh(t)$. In red the function $z_{\e}(t)$ in the limsup construction}\label{fig-limsup}
\end{center}
\begin{picture}(0,0)
\begin{footnotesize}
\put(270,91){$R_{\e}$}\put(287,91){$R_{\e}+\e$}
\put(223,140){$1$}\put(212,120){$1-\e$}
\end{footnotesize}
\end{picture}
\end{figure}
Note that, by the definition of $z_{\e}$ and by the properties \eqref{prop-recovery} above we have that there exists $C>0$ such that
\begin{equation}
\int_{-\infty}^{+\infty}((z_{\e}'(t))^{2}+(z_{\e}(t)^{2}-1)^{2})\ dt\leq m+C\e.
\end{equation}
Let $z_{n,\e}\in \tilde C_{n}(I,\R)$ be defined as follows
\begin{equation}
z_{n,\e}^{i}=z_{\e}\left(\frac{\sqrt{2\den}}{\lan}(\lan i-\frac{1}{2})\right).
\end{equation}
We have that $z_{n,\e}\to z$ in $L^{1}(I)$ as $n\to +\infty$. If we set
\begin{eqnarray*}
i_{+}=\left[\frac{1}{\lan}\left(\frac{1}{2}+\frac{\lan}{\sqrt{2\den}}(R_{\e}+\e)\right)\right]+1\quad\hbox{and}\quad
i_{-}=\left[\frac{1}{\lan}\left(\frac{1}{2}-\frac{\lan}{\sqrt{2\den}}(R_{\e}+\e)\right)\right]\,,
\end{eqnarray*}
then $|z^i_{n,\e}|=1$ for all $i\geq i_{+}$ or $i\leq i_{-}$. We now put $w_{n,\e}^{i}=\sqrt{\frac{\den}{2}}z_{n,\e}^{i}$, so that in particular for all $i$ $|w_{n,\e}^{i}|\leq \sqrt{\frac{\den}{2}}$. We can therefore define the angles 
\begin{equation*}
\varphi_{n,\e}^{i}=2\sum_{j=0}^{i}\arcsin(w_{n,\e}^{j}).
\end{equation*}
Let us observe that $\sgn(\varphi_{n,\e}^{i+1}-\varphi_{n,\e}^{i})=\sgn(w_{n}^{i})$ and that $\varphi_{n,\e}^{1}-\varphi_{n,\e}^{0}=\varphi_{n,\e}^{[1/\lan]}-\varphi_{n,\e}^{[1/\lan]-1}$. As a consequence, upon defining $u_{n,\e}^{i}=(\cos(\varphi_{n,\e})^{i},\sin(\varphi_{n,\e}^{i}))$, we have that $u_{n}\in\bUn$ and that $T_{n}(u_{n,\e})=z_{n,\e}$. Using again the limit \eqref{limit} and repeating the computation as in the proof of the lower bound, we obtain the existence of a sequence $\eta_{n}\to 0$ such that
\begin{eqnarray}\label{limsup1}
\frac{H^{hf}_{n}(u_{n,\e})}{\sqrt{2}\lan\den^{3/2}}\leq
\frac{\sqrt{2}\den^{1/2}}{\lan}
\sum_{i=0}^{[1/\lan]-2}\lan\left((z_{n,\e}^{i})^{2}-1\right)^{2}+\frac{\lan}{\sqrt{2}\den^{1/2}}(1+\eta_{n})\hspace{-.3cm}\sum_{i=0}^{[1/\lan]-2}\lan\left(\frac{z_{n,\e}^{i+1}-z_{n,\e}^{i}}{\lan}\right)^{2}.
\end{eqnarray}

Define now the piecewise constant functions $z^1_{\e,n}(s)$ via
$$
z^1_{\e,n}(s):=\begin{cases}
               \left(\frac{z_{n,\e}^{i+1}-z_{n,\e}^{i}}{\sqrt{2}\den^{1/2}}\right)&\hbox{ if }s\in \left[\frac{\sqrt{2}\den^{1/2}}{\lan}(\lan i-\frac12), \frac{\sqrt{2}\den^{1/2}}{\lan}(\lan (i+1)-\frac12)\right)\,,i=0,\dots,[1/\lan]-2\\
               0 &\hbox{ otherwise.}
               \end{cases}
$$
Notice that by constuction $z^1_{\e,n}(s)=0$ when $|s|\ge R_\e+\e+2\lan$.\\ Since each of the intervals $\left[\frac{\sqrt{2}\den^{1/2}}{\lan}(\lan i-\frac12), \frac{\sqrt{2}\den^{1/2}}{\lan}(\lan (i+1)-\frac12)\right)$ has length $\sqrt{2}\den^{1/2} \to 0$, and since $z_{\e}'$ is uniformly continuous in $\R$, we get that $|z^1_{\e,n}(s)-z_{\e}'(s)|\to 0$ uniformly with respect to $s \in \R$. Being $z^1_{\e,n}(s)=0$ outside a compact set independent of $n$, this implies
$$
\lim_{n\to +\infty}\int_{-\infty}^{+\infty}(z^1_{\e,n}(s))^2\,ds=\int_{-\infty}^{+\infty}(z_{\e}'(s))^2\,ds\,.
$$
On the other hand, by a direct computation, we get that
\begin{eqnarray*}
&\displaystyle
\frac{\lan}{\sqrt{2}\den^{1/2}}(1+\eta_{n})\hspace{-.3cm}\sum_{i=0}^{[1/\lan]-2}\lan\left(\frac{z_{n,\e}^{i+1}-z_{n,\e}^{i}}{\lan}\right)^{2}=(1+\eta_{n})\hspace{-.3cm}\sum_{i=0}^{[1/\lan]-2}\sqrt{2}\den^{1/2}\left(\frac{z_{n,\e}^{i+1}-z_{n,\e}^{i}}{\sqrt{2}\den^{1/2}}\right)^{2}\\
&\displaystyle
\le (1+\eta_{n})\int_{-\frac{\sqrt{2}\den^{1/2}}{2\lan}}^{\frac{\sqrt{2}\den^{1/2}}{2\lan}}(z^1_{\e,n}(s))^2\,ds\le(1+\eta_{n})\int_{-\infty}^{+\infty}(z^1_{\e,n}(s))^2\,ds
\end{eqnarray*}
so that
\begin{equation}\label{limsup2}
\limsup_{n\to +\infty}\frac{\lan}{\sqrt{2}\den^{1/2}}(1+\eta_{n})\hspace{-.3cm}\sum_{i=0}^{[1/\lan]-2}\lan\left(\frac{z_{n,\e}^{i+1}-z_{n,\e}^{i}}{\lan}\right)^{2}\le \int_{-\infty}^{+\infty}(z_{\e}'(s))^2\,ds\,.
\end{equation}

To estimate the other term we proceed in a similar way. We define the piecewise constant functions $\hat z_{\e,n}(s)$ via
$$
\hat z_{\e,n}(s):=\begin{cases}
               z_{n,\e}^{i}&\hbox{ if }s\in \left[\frac{\sqrt{2}\den^{1/2}}{\lan}(\lan i-\frac12), \frac{\sqrt{2}\den^{1/2}}{\lan}(\lan (i+1)-\frac12)\right)\,,i=0,\dots,[1/\lan]-2\\
               0 &\hbox{ otherwise.}
               \end{cases}
$$
Notice that by constuction $\hat z_{\e,n}(s)^2=1$ when $|s|\ge R_\e+\e+2\lan$.\\ Since each of the intervals $\left[\frac{\sqrt{2}\den^{1/2}}{\lan}(\lan i-\frac12), \frac{\sqrt{2}\den^{1/2}}{\lan}(\lan (i+1)-\frac12)\right)$ has length $\sqrt{2}\den^{1/2} \to 0$, and since $z_{\e}$ is uniformly continuous in $\R$, we get that $|\hat z_{\e,n}(s)-z_{\e}(s)|\to 0$ uniformly with respect to $s \in \R$. Being $\hat z_{\e,n}(s)^2=1$ outside a compact set independent of $n$, this implies
$$
\lim_{n\to +\infty}\int_{-\infty}^{+\infty}(\hat z_{\e,n}(s)^2-1)^2\,ds=\int_{-\infty}^{+\infty}(z_{\e}(s)^2-1)^2\,ds\,.
$$
On the other hand, since by construction
$$
\hat z_{\e,n}\left(\frac{\sqrt{2}\den^{1/2}}{\lan}(t-\frac12)\right)=z_{n,\e}^{i}\iff t\in [\lan i, \lan(i+1))
$$
via the change of variables $t-\frac 12= \frac{\lan}{\sqrt{2}\den^{1/2}}s$ we have
\begin{eqnarray*}
&\displaystyle
\int_{-\infty}^{+\infty}(\hat z_{\e,n}(s)^2-1)^2\,ds \ge \int_{-\frac{\sqrt{2}\den^{1/2}}{2\lan}}^{\frac{\sqrt{2}\den^{1/2}}{2\lan}}(\hat z_{\e,n}(s)^2-1)^2\,ds\\
&\displaystyle
= \frac{\sqrt{2}\den^{1/2}}{\lan}
\int_0^1\left(\hat z_{\e,n}\left(\frac{\sqrt{2}\den^{1/2}}{\lan}(t-\frac 12)\right)^2-1\right)^2\,dt\ge \frac{\sqrt{2}\den^{1/2}}{\lan}
\sum_{i=0}^{[1/\lan]-2}\lan\left((z_{n,\e}^{i})^{2}-1\right)^{2}
\end{eqnarray*}
so that
\begin{equation}\label{limsup3}
\limsup_{n\to +\infty}\frac{\sqrt{2}\den^{1/2}}{\lan}
\sum_{i=0}^{[1/\lan]-2}\lan\left((z_{n,\e}^{i})^{2}-1\right)^{2}\le\int_{-\infty}^{+\infty}(z_{\e}(s)^2-1)^2\,ds\,.
\end{equation}
Combining \eqref{limsup1}, \eqref{limsup2}, and \eqref{limsup3} we obtain
$$
\limsup_{n\to +\infty}\frac{H^{hf}_{n}(u_{n,\e})}{\sqrt{2}\lan\den^{3/2}}\leq \int_{-\infty}^{+\infty}((z_{\e}'(t))^{2}+(z_{\e}(t)^{2}-1)^{2})\ dt\leq m+C\e.
$$
This gives the required upper bound by arbitrariness of $\e$.

\medskip

\noindent Case $(ii)$. We argue by density. Let us consider $z\in W^{1,2}_{|per|}(I)\cap C^{\infty}(\overline I)$. We define
\begin{eqnarray}
z_{n}^{i}=\begin{cases}
z(\lan i)&\text{if } i=1,2,\dots,\left[\frac{1}{\lan}\right]-1,\\
z(1)&\text{if } i=\left[\frac{1}{\lan}\right].
\end{cases}
\end{eqnarray} 
Note that, by taking the piecewise affine interpolation of such a $z_{n}$ we have that
\begin{eqnarray*}
\lim_{n}\left(\frac{\sqrt{2}\den^{1/2}}{\lan}
\sum_{i=0}^{[1/\lan]-2}\lan\left((z_{n}^{i})^{2}-1\right)^{2}+\frac{\lan}{\sqrt{2}\den^{1/2}}(1-\gamma_{n})\hspace{-.3cm}\sum_{i=0}^{[1/\lan]-2}\lan\left(\frac{z_{n}^{i+1}-z_{n}^{i}}{\lan}\right)^{2}\right)=H^{hf}(z).
\end{eqnarray*}
To conclude, we construct $u_{n}$ as in the proof of $(i)$ and observe that \eqref{limsup1} still holds true.
\end{proof}

\bibliographystyle{plain}
\bibliography{bib-CS}

\end{document}